\newcommand{\D}{{\mathrm{d}}}
\newtheorem{theorem}{Theorem}
\newtheorem{proposition}{Proposition}
\newtheorem{definition}{Definition}
\newtheorem{corollary}{Corollary}
\newtheorem{remark}{Remark}
\begin{document}

\begin{frontmatter}

\title{{\em Maxallent}: Maximizers of all Entropies and Uncertainty of Uncertainty}
\author{A.~N.~Gorban}
 \ead{ag153@le.ac.uk}
\address{Department of Mathematics, University of Leicester, Leicester, LE1 7RH, UK}


\begin{abstract}
The entropy maximum approach (Maxent) was developed as a minimization of the subjective
uncertainty measured by the Boltzmann--Gibbs--Shannon entropy.  Many new entropies have
been invented in the second half of the 20th century. Now there exists a rich choice of
entropies for fitting needs. This diversity of entropies gave rise to a Maxent
``anarchism''. The Maxent approach is now the conditional maximization of an appropriate
entropy for the evaluation of the probability distribution when our information is
partial and incomplete. The rich choice of non-classical entropies causes a new problem:
which entropy is better for a given class of applications?  We understand entropy as a
{\em measure of uncertainty which increases in Markov processes.} In this work, we
describe the most general ordering of the distribution space, with respect to which all
continuous-time Markov processes are monotonic (the Markov order). For inference, this
approach results in a {\em set} of conditionally ``most random'' distributions. Each
distribution from this set is a maximizer of its own entropy. This ``uncertainty of
uncertainty'' is unavoidable in the analysis of non-equilibrium systems. Surprisingly,
the constructive description of this set of maximizers is possible. Two decomposition
theorems for Markov processes provide a tool for this description.
\end{abstract}
\begin{keyword}{uncertainty; Markov process; Lyapunov function; entropy; Maxent; inference}
 \PACS
05.45.-a \sep 82.40.Qt \sep  82.20.-w \sep 82.60.Hc
\end{keyword}
\date{}

\end{frontmatter}

\section{Introduction\label{sec1}}

Entropy was born in the 19th century as a daughter of energy: $\D S= \delta Q/T$.
Clausius \cite{Clausius1865}, Boltzmann \cite{Boltzmann1872} and Gibbs \cite{Gibbs1875}
(and others) had developed the physical notion of entropy. At the same time, the famous
Boltzmann's formula $S=k \log W$  had opened the informational interpretation of entropy.
In the 20th century, Hartley \cite{Hartley1928} and Shannon \cite{Shannon1948} introduced
a logarithmic measure of information in electronic communication in order ``to eliminate
the psychological factors involved and to establish a measure of information in terms of
purely physical quantities'' (\cite{Hartley1928}, p.~536). Information theory is focused
on entropy as a measure of uncertainty of subjective choice. This understanding of
entropy was returned from information theory to statistical mechanics by Jaynes
\cite{Jaynes1957} as a basis of ``subjective'' statistical mechanics: ``Information
theory provides a constructive criterion for setting up probability distributions on the
basis of partial knowledge, and leads to a type of statistical inference which is called
the maximum entropy estimate. It is least biased estimate possible on the given
information; i.e., it is maximally noncommittal with regard to missing information. That
is to say, when characterizing some unknown events with a statistical model, we should
always choose the one that has Maximum Entropy.'' This is the brief manifesto of the
Maxent (maximum of entropy) methodology.

Entropy is used for measurement of uncertainty in a probability distribution. The Maxent
method finds the maximally uncertain distribution under given values of some moments.
After Jaynes, this approach became very popular in physics \cite{JouEIT2001,GorKar2006},
statistics \cite{Csiszar1991,Gull1988}, econometrics
\cite{GolanJudgeMil1996,JudgeMittelhammer2011} and other disciplines.

The non-classical entropies were invented by R\'enyi \cite{Renyi1961} in the middle of
the 20th century, simultaneously with the expansion of the Maxent approach. This
invention introduced additional uncertainty in the uncertainty evaluation. Maximization
of different entropies produces different probability distributions under the same
conditions. Now, one has to select the proper entropy functional to use in the Maxent
approach. This choice may be non-obvious. The beautiful and transparent understanding of
the Maxent distribution as a unique  ``least biased estimate possible on the given
information'' is now destroyed by the non-classical entropies. If we consider the
non-classical entropies seriously then we have to select the proper entropy for each
problem.

If we do not find solid reasons for the entropy selection then we have to accept this
``Uncertainty of Uncertainty'' (UoU) as the nature of things. In this case, the {\em set
of all the Maxent distributions for different entropies} will evaluate the unknown
``maximally uncertain'' distribution under given conditions. We call this method of
handling the UoU the ``maximization of all entropies'' or {\em Maxallent}. If there are
some reasons for selection of a class of entropy function then we have to select the
conditional maximizer of the entropies from this class.

The widest class of entropies we use in this paper are the {\em Csisz\'ar--Morimoto
conditional entropies ($f$-divergencies)}. They were introduced by R\'enyi in his famous
work \cite{Renyi1961} where he proposed also the ``R\'enyi entropy''. The
$f$-divergencies were studied further by Csiszar \cite{Csiszar1963} and T. Morimoto
\cite{Morimoto1963}. For a discrete probability distribution
$P=(p_i)$ and the positive ``equilibrium distribution'' $P^*=(p^*_i)$, $p^*_i>0$ the
general form of the $f$-divergence is

\begin{equation}\label{Morimoto}
\boxed{H_h(P \| P^*)=\sum_i p^*_i h\left(\frac{p_i}{p_i^*}\right),}
\end{equation}
where $h(x)$ is a convex function defined on the open ($x>0$) or closed ($x\geq 0$)
semi-axis. We use here the notation $H_h(P \| P^*)$ to stress the dependence of $H_h$ both on
$p_i$ and $p^*_i$.

In some practical problems, it is convenient to use a convex function $h(x)$ with
singularity at $x=0$, for example, $h(x)=-\ln x$ (the Burg relative entropy
\cite{Burg1967}). Therefore, we assume that the function $H_h(P \| P^*)$ is defined for
positive $P$ and $P^*$. Convexity of $h(x)$ implies convexity of $H_h(P \| P^*)$ as a
function of $P$. It achieves its minimal value on the equilibrium probability, $P=P^*$
(under conditions $\sum_i p_i=1$, and $p_i >0$).  If $h(x)$ is strictly convex then
$H_h(P \| P^*)$ is also strictly convex and this minimizer (the equilibrium) is unique.

\subsection{Maxallent, approach \#1: parametrization by monotonic function of one variable}

The standard settings for the Maxent approach are: an event space $\Omega$, a divergency
$H_h(P \| P^*)$ and a set of moments $M_r(P)$ ($r=1,\ldots , k$) are given. Here, $P$ is
a probability distribution, $P^*$ is the ``maximally disordered'' probability
distribution (``equilibrium'') and $H_h(P \| P^*)$ measures the deviation of $P$ from
$P^*$. Of course, for general probability spaces we have to assume that $P$ is absolutely
continuous with respect to $P^*$ and that it is possible to compute the divergence $H_h(P
\| P^*)$. The Maxent problem is: for given values of the moments $M_r(P)$ ($r=1,\ldots ,
k$) find the minimizers of $H_h(P \| P^*)$. That is, on the set of probability
distributions with given values of $M_r(P)$ ($r=1,\ldots , k$) find the distributions
that are the closest to the equilibrium $P^*$ if we measure the deviation by $H_h(P\|
P^*)$. The terminological mess ({\em Max}ent and {\em min}imizers) appears due to
historical reasons. Divergences measure the differences between distributions and we
always look for minimizers of them.

To avoid the irrelevant technicalities we consider discrete distributions. Let
$\Omega=\{A_1,A_2,\ldots, A_n\}$ be a finite event space
with probability distributions $P=(p_i)$. The set of probability distribution is the
standard simplex $\Delta^{n-1}$ in $\mathbb{R}^n$. The set of positive distribution
($p_i>0$) is $\Delta^{n-1}_+$, the relative interior of the standard simplex.

The Maxent problem for $H_h(P \| P^*)$ and given values of moments
$\sum_{j}m_{rj}p_j=M_r$ ($r=1,\ldots , k$) reads: find $P\in \Delta^{n-1}$ such that
$$H_h(P \| P^*) \to \min, \; \mbox{ subject to } \; \sum_{j}m_{rj}p_j=M_r.$$
The total probability condition gives $\sum_{j}m_{0j}p_j=1$ ($m_{0j}=1$, $M_0=1$). Assume
that $k+1<n$ and
$${\rm rank} (m_{rj})=k+1 \;\; (r=0,1,\ldots, k; \, j=1,2, \ldots, n).$$
If ${\rm rank} (m_{rj})<k+1$ then just exclude some moments.

The method of Lagrange multipliers gives for $P\in \Delta^{n-1}_+$
\begin{equation}\label{LagrangeMaxEnt}
h'\left(\frac{p_j}{p_j^*}\right)=\sum_{r=0}^k \lambda_r m_{rj}\;\; (j=1,\ldots, n).
\end{equation}
The derivative $h'$ is a monotonic function. Let $h$ be strictly convex. Then the inverse
function $g(y)$ exists, $g(h'(x))=x$ (for positive $x$). We can apply the function $g$ to
both sides of (\ref{LagrangeMaxEnt}) and write the expression of $P$ and the equations
for the Lagrange multipliers $\lambda_r$ that are just the moment conditions $\sum_j
m_{rj}p_j=M_r$:

\begin{equation}\label{LagrangeMaxEntAnsw}
\boxed{\begin{gathered}
p_j=p_j^*g\left(\sum_{r=0}^k\lambda_r m_{rj}\right) \; \;(j=1,\ldots, n); \\
\sum_j m_{\rho j}p_j^*g\left(\sum_{r=0}^k\lambda_r m_{rj}\right)=M_{\rho} \;\; (\rho=0,\ldots, k) \, .
\end{gathered}}
\end{equation}

Therefore, for the class of the strictly convex functions $h$ all the positive {\em
solutions of the Maxent problem for all $f$-divergencies are parameterized
(\ref{LagrangeMaxEntAnsw}) by the monotonic function $g$}.

The function $g$ should be defined on a real interval $(a,b)=h'((0,\infty))$ (it might be
that $a= -\infty$ or $b=\infty$). The image of $g$ should be the real semi-axis
$(0,\infty)$ because $p/p^*$ may be any positive number. Therefore, $\lim_{y\to a}g(y)=0$
and for finite $a$ the function $g$ is defined on $[a,b)$. For each monotonically
increasing function $g$ on a real interval $(a,b)$ with ${\rm im}\, g=(0,\infty)$,  the
corresponding solution of the Maxent problem is given by the distribution
(\ref{LagrangeMaxEntAnsw}), where $\lambda_i$ are the solutions of the corresponding
equation. This solution of the Maxent problem is the conditional minimizer of
$H_h(P\|P^*)$ with $h(x)=\int h'(x) \D x$, where $h'(x)$ is the inverse function of
$g(y)$, i.e. $h'(x)=y$, where $y$ is the solution to the equation $g(y)=x$. The additive
constant in $\int h'(x) \D x$ does not affect the solution of any Maxent problems and may
be chosen arbitrarily. Thus, we present the parametric description of the minimizers of
all strictly convex divergences $H_h(P\|P^*)$. A monotonic function $g$ with the values
range $(0,\infty)$ serves as a parameter in this description.

For the existence of a positive distribution $P$ which satisfies
(\ref{LagrangeMaxEntAnsw}) the moment conditions $\sum_j m_{rj}p_j=M_r$ ($\rho=0,\ldots,
k$) should be compatible with the positivity of $p_i$. Of course, for arbitrary $g$ this
may be not sufficient for the existence of such a positive distribution. To guarantee the
existence of a positive Maxent distribution it is sufficient to add to the function
$h(x)$ a term $\varepsilon x\ln x$ with arbitrarily small positive $\varepsilon$. This
term creates a logarithmic singularity of $h'(x)$ at zero. It is easy to check that this
singularity guarantees the existence of a positive solution of (\ref{LagrangeMaxEntAnsw})
if the moment conditions are compatible with the positivity of $p_i$. For some applied
purposes an additional term $-\varepsilon \ln x$ may be even more convenient
\cite{GorPack2012} because it guarantees the logarithmic singularity of entropy and
$h'(x)$ has the singularity $\sim -1/x$ at zero.

In this paper, the question about existence of the positive Maxent distribution is not
important. We need only the conditions  (\ref{LagrangeMaxEntAnsw}) which are  necessary
and sufficient for a positive distribution $P=(p_i)$ to provide a minimizer of the given
$f$-divergency under moment conditions.

\subsection{Maxallent, approach \#2: the Markov order}

Any Markov process with equilibrium $P^*$ increases disorder. The classical
Boltzmann--Gibbs--Shannon entropy grows in Markov processes. This  theorem (the ``data
processing lemma'') was proved in the first paper of Shannon \cite{Shannon1948} but of
course the entropy growth in kinetics was known before (Boltzmann's $H$-theorem
\cite{Boltzmann1872} and its generalization for the systems without detailed balance
\cite{Boltzmann1887}).

A. R\'enyi proved in the first paper
about the non-classical entropies \cite{Renyi1961} that all $f$-divergencies (\ref{Morimoto}) decrease in
Markov processes with equilibrium $P^*$. Later on, it was demonstrated that this property characterizes
$f$-divergencies among all functions which can be presented in the form of the sums over
states (the ``trace form'') \cite{ENTR3,Amari2009,GorGorJudge2010}.

The generalized data processing lemma was proven \cite{Cohen1993,CohenIwasa1993}: For
every two positive probability  distributions $P,Q$ the divergence $H_h(P \| Q)$
decreases under action of a stochastic matrix $A=(a_{ij})$
$$H_h(AP \| AQ)\leq \overline{\alpha}(A) H_h(P \| Q),$$
where
$$\overline{\alpha}(A)=\frac{1}{2}\max_{i,k} \left\{\sum_j |a_{ij}-a_{kj}| \right\} $$
is the ergodicity contraction coefficient,  $0 \leq \overline{\alpha}(A) \leq 1$.

A second method of handling the UoU is based on a simple remark: ``uncertainty of a
probability distribution should increase in Markov processes''. More precisely, let
the most uncertain distribution $P^*$ be given (the equilibrium).  If a distribution
$P'$ can be obtained from a distribution $P$ in a Markov process with equilibrium $P^*$
then we can assume:
$$\mbox{uncertainty of }P \leq \mbox{uncertainty of }P'.$$
Thus, we do not care about the {\em values} of the uncertainty measure,
we just {\em compare} the uncertainty of distributions:
$P'$ is more uncertain than $P$ under given equilibrium $P^*$
(in this sense, the values vanish but the (pre)order appears \cite{GorGorJudge2010}).

In the Maxent approach, the entropy is used as a (pre)order in the distribution space,
not as a function, and the values are not important because any monotonically increasing
transformation of the entropy does not change the solution of the Maxent problem. Of
course, in some other applications the values of entropy are important: in coding theory
(bits per symbol) and in thermodynamics ($\D U=T\D S$) the values of the entropy have a
specific important sense. Nevertheless, when we discuss the entropy as a measure of
uncertainty and work with the huge population of  non-classical entropies, these
entropies are, in their essence, (pre)orders on the space of distributions.

We consider the {\em continuous time Markov processes with a given equilibrium
distribution $P^*$}. By definition, the equilibrium is the unconditionally maximally
uncertain distribution. To add the moment conditions we define a linear manifold in the
space of distributions.  For every non-equilibrium distribution $P$ each Markov process
with the equilibrium distribution $P^*$ determines the direction of $P$ evolution, $\D P/
\D t$. In this direction, the distribution becomes more uncertain. Let us take this
property as a definition of the uncertainty. Instead of an entropy functional we use the
transitive closure of this relation, define an order on the space of distributions and
call it the ``Markov order'' \cite{GorGorJudge2010}.

Let $\mathbf{Q}(P,P^*)$ be a cone of possible time derivatives $\D P/ \D t$ for a given
probability distribution $P$, the equilibrium $P^*$, and all Markov processes with
equilibrium  $P^*$.

For fixed values of moments, $M_r$, the conditionally linear manifold $L$ in the space of the
probability distributions is given by equations $\sum_{j}m_{rj}p_j=M_r$ ($r=0,\ldots,
k$). We can consider $P^0 \in L$ as a possibly extremely disordered distribution on $L$,
if for any Markov process with equilibrium $P^*$ the solution $P(t)$ of the Kolmogorov equation
with initial condition $P(0)=P^0$ has no points on the conditionally linear manifold
$L$ for $t> 0$ (we assume that $P^0$ is not a steady state for this process). Instead of
this global condition, we consider the local condition (Fig.~\ref{CondExtr}).
\begin{definition}\label{Def:localOrder}The distribution $P^0 \in L \cap \Delta^{n-1}_+$ is a local minimum of the Markov order on $L\cap \Delta^{n-1}_+$ if
\begin{equation}\label{localExtr}
\boxed{(P^0+\mathbf{Q}(P^0,P^*))\cap L= \{P^0\}.}
\end{equation}
\end{definition}
Further, for short, we can omit $\Delta^{n-1}_+$ and call $P^0$ ``a local minimum of the
Markov order on $L$''. In this definition, we substitute the trajectories $P(t)$ by their
tangent directions at point $P^0$, $\D P(t)/\D t \in \mathbf{Q}(P^0,P^*)$. In
Sec.~\ref{Sec:order} we justify this substitution and prove that the local condition
(\ref{localExtr}) holds if and only if for every Markov process with equilibrium $P^*$
the solution  $P(t)$ of the Kolmogorov equation with initial condition $P(0)=P^0$ has no
points on the condition linear manifold $L$ for $t>0$ (if $P^0$ is not a steady state for
the process).

\begin{figure}
\centering{
\includegraphics[width=0.4\textwidth]{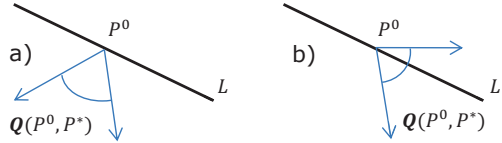}
\caption{\label{CondExtr}The local condition (\ref{localExtr}): $P^0$ may be an extremely disordered distribution on the condition linear manifold $L$ if the set
$P^0+\mathbf{Q}(P^0,P^*)$ intersects the linear manifold of conditions at the only point $P^0$; (a)$(P^0+\mathbf{Q}(P,P^*))\cap L= \{P^0\}$ and $P^0$ may be an extremely disordered distribution on $L$; (b) $(P^0+\mathbf{Q}(P,P^*))\cap L \varsupsetneqq \{P^0\}$ and $P^0$ has no chance to be an extremely disordered distribution on $L$. In case (b), there are more disordered distributions on $L$ achievable by the Markov processes from the initial distribution $P^0$.} }
\end{figure}

For applications, we need the local minima condition formalized by
Definition~\ref{Def:localOrder} and the local order generated by the cone
$\mathbf{Q}(P^0,P^*))$ only. The general notion of (global) Markov order appears later,
in Section~\ref{Sec:Equival}, where we prove equivalence of the Maxima of all entropies
and the Markov order approaches. Surprisingly, the set of the conditional minimizers of
all $f$-divergencies  and the set of the conditionally minimal elements of the Markov
order coincide for the same conditions (Sec.~\ref{Sec:Equival}). These sets include all
reasonable hypotheses about conditionally most uncertain distributions. Let us call the
problem of description of all the conditional minima of the Markov order the {\em
Maxallent problem}.

\subsection{Main tool: decomposition theorems}

The main tools for constructive work with the Markov orders are the decomposition
theorems for Markov chains. {\em The first decomposition theorem} states that every
Markov chain with a positive equilibrium distribution is a convex combination of the
simple directed cyclic Markov chains with the same equilibrium. The coefficients in this
decomposition do not depend on the current probability distribution: the vector field $\D
P/ \D t$ for a general Markov chain is a convex combination of these vector fields for
simple cyclic Markov chains with the same positive equilibrium.

{\em The second decomposition theorem} states that for every Markov chain
with a positive equilibrium distribution and for any non-equilibrium distribution $P$ the
velocity vector $\D P/ \D t$ is a convex combination of the velocity vectors for the
simple cyclic Markov chains {\em of the length two} with the same equilibrium (i.e. of the
reversible transitions between two states, $A_i \rightleftharpoons A_j$). The coefficients in this decomposition
typically depend on the current probability distribution.

The idea of the first decomposition theorem was used by Boltzmann in 1882
\cite{Boltzmann1887} in his proof of the $H$-theorem for systems without detailed
balance. (This was his answer to the Lorentz objections \cite{Lorentz1887}.) He did not
formulate this theorem separately but efficiently used the cycle decomposition for
generalization of detailed balance. Later on, his extension of the detailed balance
conditions were analyzed by many authors under different names as ``cyclic balance'',
``semi-detailed balance'' or ``complex balance'' (see, for example, the review
\cite{GorbanShahzad2011}). Now, the theory of the cycle decomposition is a well developed
area of the theory and applications of the random processes \cite{Kalpazidou2006}.

The second decomposition theorem is less known. We found this theorem in the analysis of
the Markov order \cite{GorGorJudge2010}. This decomposition means that for the general
first-order kinetics and an arbitrary non-equilibrium probability distribution $P$  there
exists a system with detailed balance and the same equilibrium that has the same velocity
$dP/dt$ at point $P$ \cite{GorbanEq2012arX}: the classes of the general Markov processes
and the Markov processes with detailed balance are pointwise equivalent.

The decomposition theorems are discussed in Appendix~B in more detail.

\section{Local minima of Markov order \label{Sec:order}}

Let us consider continuous time Markov chains with $n$ states $A_1, \ldots , A_n$. The
{\em Kolmogorov equation (or master equation)} for the probability distribution $P=(p_i)$
is
\begin{equation}\label{MAsterEq0}
\frac{\D p_i}{\D t}= \sum_{j, \, j\neq i} (q_{ij}p_j-q_{ji}p_i) \;\; (i=1,\ldots, n) ,
\end{equation}
where $q_{ij}$ ($i,j=1,\ldots, n$, $i\neq j$) are non-negative.

In this notation, $q_{ij}$ is the {\em rate constant} for the transition $A_j \to A_i$.
Any  non-negative values of the coefficients $q_{ij}$ ($i\neq j$) correspond to a master
equation. Therefore, the set of all the Kolmogorov equations (\ref{MAsterEq0}) may be
considered as the positive orthant $\mathbb{R}_+^{n(n-1)}$ in $\mathbb{R}^{n(n-1)}$ with
coordinates $q_{ij}$ ($i\neq j$).

Now, let us restrict our consideration to the set of the Markov chains with the
given positive equilibrium distribution $P^*$ ($p^*_i>0$).
\begin{equation}\label{MasterEquilibrium}
\sum_{j, \, j\neq i} q_{ij}p^*_j = \left(\sum_{j, \, j\neq i}
q_{ji}\right)p^*_i \; \mbox{ for all }i=1,\ldots, n.
\end{equation}
This system of uniform linear equations define a cone of the  $q_{ij}$
($i,j=1,\ldots, n$, $i\neq j$) in $\mathbb{R}_+^{n(n-1)}$.

Under the {\em balance condition} (\ref{MasterEquilibrium}), the Kolmogorov equations
(\ref{MAsterEq0}) may be rewritten in a convenient  equivalent form:
\begin{equation}\label{MAsterEq1}\boxed{
\frac{\D p_i}{\D t}= \sum_{j, \, j\neq i}
q_{ij}p^*_j\left(\frac{p_j}{p_j^*}-\frac{p_i}{p_i^*}\right) \;\; (i=1,\ldots, n) .}
\end{equation}

We use below one of the $f$-divergencies (\ref{Morimoto}) with $h(x)=(x-1)^2$. It is a
quadratic divergence, the weighted $l_2$ distance between $P$ and $P^*$:
$$H_2(P\|P^*)=\sum_i\frac{(p_i-p^*_i)^2}{p^*_i}\, .$$

With the master equation in the form (\ref{MAsterEq1}), it is straightforward to
calculate the time derivative of $H_2(P\|P^*)$
\begin{equation}\label{QuadLyap}
\frac{\D H_2(P\|P^*)}{\D t}=-\sum_{i,j, \, j\neq i}q_{ij}p^*_j\left(\frac{p_i}{p^*_i} - \frac{p_j}{p^*_j}\right)^2 \leq 0\, .
\end{equation}
Each term in the sum is non-negative. The time derivative (\ref{QuadLyap})
is {\em strictly negative} if for a transition $A_j \to
A_i$ the rate constant is positive, $q_{ij}>0$, and $\frac{p_i}{p^*_i} \neq
\frac{p_j}{p^*_j}$. Hence, if the state $P$ is not an equilibrium (i.e., the
right hand side in (\ref{MAsterEq1}) is not zero) then $\frac{\D
H_2(P\|P^*)}{\D t}<0$.

An important class of the Markov chains is formed by reversible chains with detailed
balance. The {\em detailed balance} condition reads:
\begin{equation}\label{detBal}
q_{ij}p^*_j=q_{ji}p^*_i \;\; \mbox{ for all }i,j=1,\ldots, n.
\end{equation}
Under this condition, there are only $\frac{n(n-1)}{2}$ independent coefficients among
$n(n-1)$ numbers $q_{ij}$. For example, we can arbitrarily select $q_{ij} \geq 0$ for
$i>j$ and then take $q_{ij} = q_{ji}\frac{p_i^*}{p^*_j}$ for $i<j$. So,  for given $P^*$,
the cone of the detailed balance systems (\ref{detBal}) is a positive orthant in
$\mathbb{R}^{\frac{n(n-1)}{2}}$ embedded in $\mathbb{R}_+^{n(n-1)}$. The equilibrium
fluxes $$w_{ij}^*=q_{ij}p^*_j=q_{ji}p^*_i \;\; (i>j)$$ are the convenient coordinates in
$\mathbb{R}^{\frac{n(n-1)}{2}}$ for a description of the systems with detailed balance.

Let $\mathbf{Q} (P,P^*)$ be the set of all possible velocities $\D P/\D t$ at a
non-equilibrium distribution $P$ for all Markov chains which obey a given positive
equilibrium $P^*$. According to the second decomposition theorem, the set of all possible
velocities $\D P/\D t$ for the chains with detailed balance and the same equilibrium is
the same cone $\mathbf{Q} (P,P^*)$. Therefore, $\mathbf{Q} (P,P^*)$ is a convex
polyhedral cone and its extreme rays consist of the velocity vectors for  two-state
Markov chains $A_i \rightleftharpoons A_j$ with rate constants $q_{ji}= \kappa /p_j^*$,
$q_{ji}= \kappa /p_i^*$ ($\kappa >0$).

The construction of the cones of possible velocities was proposed in 1979
\cite{Gorban1979} for systems with detailed balance in the general setting, for nonlinear
chemical kinetics. These systems are represented by stoichiometric equations of the
elementary reactions coupled with the reverse reactions:
\begin{equation}\label{Stoichiometric}
\alpha_{\rho 1}A_1+\ldots + \alpha_{\rho n}A_n \rightleftharpoons \beta_{\rho 1}A_1+\ldots
+ \beta_{\rho n}A_n\, ,
\end{equation}
where $\alpha_{\rho i}, \, \beta_{\rho i} \geq 0$ are the stoichiometric coefficient,
$\rho $ is the reaction number ($\rho =1, \ldots, m$). The {\em stoichiometric vector} of
the $\rho $th reaction is an $n$ dimensional vector $\gamma_\rho $ with coordinates
$\gamma_{\rho i} = \beta_{\rho i}- \alpha_{\rho i}$. The reaction rate is
$w_{\rho}=w_{\rho}^{+}-w_{\rho}^-$, where $w_{\rho}^{+}$ is the rate of the direct
elementary reaction and $w_{\rho}^-$ is the rate of the reverse reaction

The equilibria of the $\rho $th pair of reactions (\ref{Stoichiometric}) form a
hypersurface in the space of concentrations. The intersection of these surfaces for all
$\rho $ is the equilibrium (with detailed balance). Each surface  of the equilibria of a
pair of elementary reactions (\ref{Stoichiometric}) divides the non-negative orthant of
concentrations into three sets: (i) $w_{\rho}>0$, (ii) $w_{\rho}=0$ (the surface of the
equilibria) and (iii) $w_{\rho}<0$. All the surfaces of equilibria ($w_{\rho}=0$) divide
the non-negative orthant of concentrations into compartments. In each compartment, the
dominant direction of each reaction (\ref{Stoichiometric}) is fixed and, hence, the cone
of possible velocities is also constant. It is a piecewise constant function of
concentrations:
$$\mathbf{Q}={\rm cone}\{\gamma_{\rho } {\rm sign}(w_{\rho }) \, | \, \rho =1, \ldots ,
m\}\, ,$$ where ``cone" stands for the conic hull, that is the set of all linear
combinations with non-negative coefficients. Here and below we use the three-valued sign
function (with values $\pm 1$ and 0).

Let us apply this construction to Markov chains with detailed balance. Let us join
the transitions $A_i \rightleftharpoons A_j$ in pairs (say, $i>j$) and introduce the {\em
stoichiometric vectors} $\gamma^{ji}$ with coordinates:
\begin{equation}
\gamma^{ji}_k=\left\{\begin{array}{ll}
-1 &\mbox{ if } k=j,\\
1 &\mbox{ if } k=i,\\
0 &\mbox{ otherwise}.
\end{array}
\right.
\end{equation}
Let us rewrite the Kolmogorov equation for the Markov process with detailed balance
(\ref{detBal}) in the quasichemical form:
\begin{equation}\label{QuasiChemKol}
\frac{\D P}{\D t}=\sum_{i>j}w_{ij}^*\left(\frac{p_j}{p_j^*}-\frac{p_i}{p_i^*}\right) \gamma^{ji}\, .
\end{equation}
Here, $w_{ij}^*=q_{ij}p^*_j=q_{ji}p^*_i$ is the {\em equilibrium flux} from $A_i$ to
$A_j$ and back.

The cone of possible velocities for (\ref{QuasiChemKol}) is
\begin{equation}\label{COneQPP}
\boxed{\mathbf{Q}(P,P^*)={\rm cone}\left\{\gamma^{ji}
{\rm sign}\left(\frac{p_j}{p_j^*}-\frac{p_i}{p_i^*}\right) \, \Big| \, i>j \right\}\, .}
\end{equation}

The standard simplex of distributions $P$ is divided by linear manifolds
$\frac{p_i}{p_i^*}=\frac{p_j}{p_j^*}$ into compartments. They are the polyhedra where the
cone of the local Markov order  $\mathbf{Q}{(P,P^*)}$ is constant. The compartments for
the Markov chains with the positive equilibrium $P^*$ correspond to various partial
orders on the finite set $\{p_i/p_i^*\}$ ($i=1, \ldots, n$).

Let us describe the compartments and cones in more detail following
\cite{GorGorJudge2010}. For every natural number $k \leq n-1$ the $k$-dimensional
compartments are enumerated by surjective functions $\sigma :\{1,2,\ldots ,n\} \to
\{1,2,\ldots ,k+1\}$. Such a function defines the partial ordering of quantities
$\frac{p_j}{p_j^*}$ inside the compartment:
\begin{equation}
\frac{p_i}{p_i^*} > \frac{p_j}{p_j^*} \; \; {\rm if} \;\; \sigma(i)
< \sigma (j); \;\;\; \frac{p_i}{p_i^*} = \frac{p_j}{p_j^*} \; \;
{\rm if} \;\; \sigma(i) = \sigma (j)\, .
\end{equation}
Let  $\mathcal{C}_{\sigma}$ be the corresponding compartment and  $Q_{\sigma}$ be the
corresponding local Markov order cone ($\mathbf{Q}(P,P^*)=Q_{\sigma}$ if $P\in
\mathcal{C}_{\sigma}$).

For a given surjection $\sigma$ the compartment $\mathcal{C}_{\sigma}$ and the cone
$Q_{\sigma}$ have the following description:
\begin{equation}\label{CompartConeDescr}
\boxed{\begin{gathered}
\mathcal{C}_{\sigma}=\left\{ P \ \Big| \ \frac{p_i}{p^*_i}=\frac{p_j}{p^*_j} \;\; {\rm
for} \;\; \sigma(i)=\sigma(j) \;\;{\rm and} \;\; \frac{p_i}{p^*_i}>\frac{p_j}{p^*_j}
\;\; {\rm for} \;\; \sigma(j)=\sigma(i)+1 \right\}\, ;\\
Q_{\sigma}={\rm cone}\{\gamma^{ij} \ | \ \sigma(j)=\sigma(i)+1 \}.
\end{gathered}}
\end{equation}

In Fig.~\ref{3stateCones}, the partition of the standard distribution simplex into
compartments, and the cones (angles) of possible velocities are presented for the Markov
chains with three states. In the construction of this cone, reversible chains with
detailed balance are used. Due to the second decomposition theorem, this construction of
the cone of possible velocities is valid for the class of general Markov chains (and not
only for reversible chains) with the same equilibrium. It seems quite surprising that the
Markov order for general Markov chains is generated by the reversible Markov chains which
satisfy the detailed balance principle.

\begin{figure}[t]
\centering{
\includegraphics[width=0.6\textwidth]{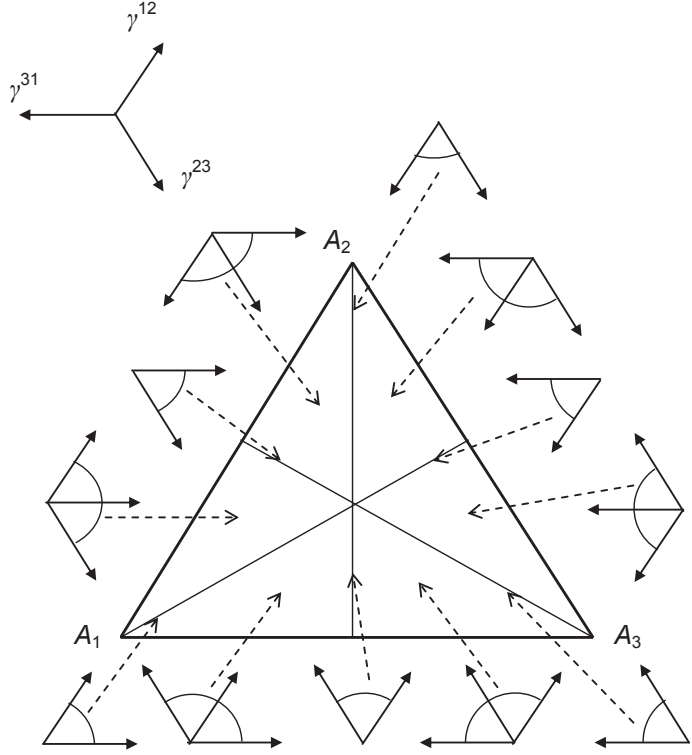}
\caption{\label{3stateCones}The cones of possible velocities  $\mathbf{Q}$ for all Markov chains
with three states and equilibrium equidistribution ($p_i^*=1/3$). The triangle of the probability distributions $(p_1,p_2,p_3)$
($p_i\geq0$, $p_1+p_2+p_3=1$) has the vertices $A_i$, where $p_i=1$ and other probabilities are zeros.
Equilibrium is the centre of the triangle.
This triangle is divided by three lines of partial equilibria ($A_i\rightleftharpoons A_i$) into 12 compartments and the equilibrium point.
Six compartments are triangles and  six other compartments are segments. For all compartments the cones (here the angles) of possible velocities are shown.
Each cone is connected with the corresponding compartment by a dashed line. In each cone, the vectors
$\gamma^{ji} {\rm sign}\left(\frac{p_j}{p_j^*}-\frac{p_i}{p_i^*}\right)$ are presented. For the 2D (triangle) compartments
all three vectors are non-zero. For the 1D compartments (segments) only two these vectors are non-zero. The vectors $\gamma^{ji}$
are presented separately, in the top left corner.} }
\end{figure}

Let $L$ be a linear manifold in  the probability distribution space. Due to
Definition~\ref{Def:localOrder}, $P^0\in L \cap \Delta^{n-1}_+$ is a local minimum of the
Markov order on $L \cap \Delta^{n-1}_+$ if the condition (\ref{localExtr}) holds.

In Fig.~\ref{Fig:extremes1} the sets of conditional minimizers are presented for the
Markov order on the straight line $L$ for the Markov chain with three states and
symmetric equilibrium ($p_i^*=1/3$).  Two general positions of $L$ in the probability
triangle are used (Fig.~\ref{Fig:extremes1}a,b). If $L$ is parallel to one side of the
triangle (Fig.~\ref{Fig:extremes1}c) then the moments are just some of the $p_i$ and the
extreme points of the Markov order on $L\cap \Delta^{n-1}_+$ coincide with the partial
equilibria.

\begin{figure}
\centering{
\includegraphics[width=0.9\textwidth]{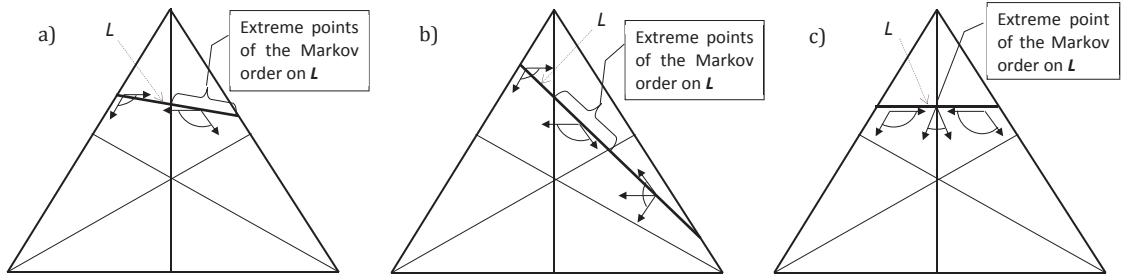}
\caption{\label{Fig:extremes1}a) and b) Extreme points of the Markov order for the Markov chain with three states
and different positions of the condition line $L$. c) Extreme points of the Markov order
coincide with the partial equilibria, when the moments are just some of $p_i$} }
\end{figure}

Let $J$ be a set of pairs of indexes $(i,j)$ ($i > j$) and $\mathcal{K}_J$ be the class
of kinetic equations (\ref{QuasiChemKol}) with $w_{ij}^*=0$ for $(i,j) \notin J$  and
$w_{ij}^*\geq 0$ for $(i,j) \in J$ ($i\neq j$). We  define $\Phi_J(P^0)$ for an initial
distribution $P^0$ as a set of all values $P(t)$ ($t>0$) for solutions $P(t)$ of all
equations from the class $\mathcal{K}_J$ with initial value $P(0)=P^0$.

Consider a cone of possible velocities for the set of transitions $A_i \rightleftharpoons
A_j$, $(i,j)\in J$:
$$\mathbf{Q}_J(P,P^*)={\rm cone}\left\{\gamma^{ji}
{\rm sign}\left(\frac{p_j}{p_j^*}-\frac{p_i}{p_i^*}\right) \, \Big| \,(i,j)\in J \right\}\, .$$

The following proposition states that in a vicinity of the distribution $P^0$ the sets
$\Phi_J(P^0)$ and $P^0 + \mathbf{Q}_J(P^0,P^*)$ coincide. This gives a justification of
the use of the cone of the tangent directions $\mathbf{Q}_J(P^0,P^*)$  in the definition
of the local minima of the Markov order (\ref{localExtr}).

\begin{proposition}\label{OrderJustify1}
Let $\frac{p_j}{p_j^*}-\frac{p_i}{p_i^*}\neq 0$ ($(i,j) \in J$) for a distribution $P=P^0$. There exists a vicinity $U$ of $P^0$ where $P^0+\mathbf{Q}_J(P^0,P^*)$ coincides with $\Phi_J(P^0)$:
$$(P^0+\mathbf{Q}_J(P^0,P^*))\cap U=\Phi_J(P^0)\cap U\, .$$
\end{proposition}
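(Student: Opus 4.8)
The plan is to establish the two inclusions separately after a single reduction. First I would note that along the class $\mathcal{K}_J$ the right-hand side of (\ref{QuasiChemKol}) is \emph{linear} in $P$: every summand $w_{ij}^*(p_j/p_j^*-p_i/p_i^*)\gamma^{ji}$ is linear in $P$, so the restricted Kolmogorov equation is $\D P/\D t=A_w P$ with $A_w=\sum_{(i,j)\in J}w_{ij}^*B_{ij}$ depending linearly on the rate vector $w=(w_{ij}^*)\ge 0$. Hence $P(t;w)=e^{tA_w}P^0$ and, because $tA_w=A_{tw}$, one has $P(t;w)=P(1;tw)$; as $t>0$ and $w\ge 0$ range, $tw$ covers the whole orthant, so $\Phi_J(P^0)$ is exactly the image of the analytic map $\Phi(u)=e^{A_u}P^0$ on $u\ge 0$. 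Writing $g_{ij}:=\gamma^{ji}{\rm sign}(p_j^0/p_j^*-p_i^0/p_i^*)$, the hypothesis $p_j^0/p_j^*-p_i^0/p_i^*\neq 0$ makes $D\Phi(0)$ send the basis vector $e_{ij}$ to $|p_j^0/p_j^*-p_i^0/p_i^*|\,g_{ij}$, so it carries the orthant onto the generating set of $\mathbf{Q}_J(P^0,P^*)$.

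For the inclusion $\Phi_J(P^0)\cap U\subseteq(P^0+\mathbf{Q}_J(P^0,P^*))\cap U$ I would take $U$ small enough that none of the affine functions $p_i/p_i^*-p_j/p_j^*$, $(i,j)\in J$, changes sign on $U$; this is possible by continuity and the hypothesis, and on $U$ the cone $\mathbf{Q}_J(P,P^*)$ equals the constant cone $\mathbf{Q}_J(P^0,P^*)$. For any trajectory that remains in $U$ up to time $t$ the velocity $\D P/\D t$ lies in this fixed closed convex cone at each instant, whence $P(t)-P^0=\int_0^t(\D P/\D s)\,\D s$ lies in $\mathbf{Q}_J(P^0,P^*)$, being a limit of Riemann sums of cone elements. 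Sign constancy is exactly what keeps the whole arc, not merely its endpoint, inside one compartment.

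The reverse inclusion $(P^0+\mathbf{Q}_J(P^0,P^*))\cap U\subseteq\Phi_J(P^0)$ is the crux, and I expect it to be the main obstacle, since $D\Phi(0)$ mapping the orthant onto the cone is \emph{not} sufficient: a diffeomorphism can send the orthant strictly inside a cone to which it is tangent at the apex (for instance $(x,y)\mapsto(x+y^2,y)$ leaves a gap along an edge), so no purely topological degree argument closes the gap. The structural fact I would exploit is that each extreme ray of $\mathbf{Q}_J$ is swept \emph{exactly} by one two-state relaxation $A_i\rightleftharpoons A_j$: activating a single $w_{ij}^*$ moves $P$ along the straight segment $P^0+s\,g_{ij}$ toward the partial equilibrium, covering $\{P^0+s\,g_{ij}:0\le s\le\delta\}$ for some $\delta>0$, so the $1$-skeleton of the cone is filled by genuinely straight arcs and the inward bending responsible for the gap above cannot occur there. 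I would then fill the cone by induction on its faces: on a facet, whose generating transitions form a subproblem of the same type, the relative neighborhood of $P^0$ is covered by the inductive hypothesis, and I would switch on one further transition transverse to the facet and apply the inverse function theorem to the two-parameter map (facet data, transverse rate), whose differential is onto because the extra generator lies outside the facet's linear span. The hard part is this inductive step when the two transitions share a state and hence do not commute, so the flows no longer split as a product and I must rule out a reappearing gap: concretely, I would verify that the transverse derivative $\partial_{u_0}\Phi$ at an already-reached facet point is non-degenerate and points strictly into the cone, using the strict contraction of $H_2(P\|P^*)$ in (\ref{QuadLyap}). It is this relaxation monotonicity, rather than soft topology, that forces the curvilinear attainable set to coincide locally with the straight cone.
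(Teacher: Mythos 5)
Your reduction ($\Phi_J(P^0)$ as the image of $u\mapsto e^{A_u}P^0$ on the parameter orthant) is correct and tidy, but both halves of your argument have holes, one fixable and one essential. For the inclusion $\Phi_J(P^0)\cap U\subseteq P^0+\mathbf{Q}_J(P^0,P^*)$ your Riemann-sum argument only covers trajectories that \emph{remain} in $U$ up to time $t$; by definition, $\Phi_J(P^0)\cap U$ also contains points $P(t)\in U$ reached by trajectories that left $U$, passed through regions where the sign pattern of the $p_i/p_i^*-p_j/p_j^*$ is different (so the velocity is no longer in the fixed cone), and re-entered. A plain small ball does not exclude this. The paper's proof is built exactly around this difficulty: it takes $U=\{P\in B_r \mid H_2(P\|P^*)>h_r(P^0)\}$, where $h_r(P^0)$ is the maximum over rays $e\in\mathbf{Q}_J$ of the minimum of $H_2$ along $P^0+\lambda e$ inside $B_r$, so that a trajectory can leave $(P^0+\mathbf{Q}_J(P^0,P^*))\cap U$ only through the level set $H_2=h_r(P^0)$, and then, since $H_2$ decreases along \emph{every} solution of \emph{every} system in $\mathcal{K}_J$, it can never return to $U$. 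Some argument of this Lyapunov (or monotonicity) type is indispensable and is missing from your proof.

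For the reverse inclusion, which you rightly call the crux, what you offer is a plan rather than a proof: the step ``verify that the transverse derivative is non-degenerate and points strictly into the cone'' \emph{is} the reappearing-gap problem you yourself diagnose, and it is never carried out. Worse, if your face induction is read as ``reach a facet point by facet dynamics, then switch on a transverse transition,'' it proves a statement about concatenations of trajectories of different systems, which is not $\Phi_J(P^0)$: each element of $\mathcal{K}_J$ has fixed rates, so no switching is permitted. If instead it is read through your map $u\mapsto e^{A_u}P^0$, the transverse derivative at a boundary parameter value is a Duhamel integral $\int_0^1 e^{sA_{u_F}}B_0e^{(1-s)A_{u_F}}P^0\,\D s$, and its position relative to the cone is precisely what must be established. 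The paper closes this step by a different, complete device: for each $x$ in the base simplex $\mathcal{B}_J$ of the pyramid $P^0+\alpha\mathcal{Q}_J(P^0,P^*)$ it designs a single system $K_x\in\mathcal{K}_J$ whose initial velocity is exactly $x$, considers the exit-point map $x\mapsto y(x)$ of $\mathcal{B}_J$ into itself, notes that this map leaves every face of the simplex invariant \emph{exactly} (a system supported on a face keeps $P(t)-P^0$ in that face --- the same structural fact behind your straight-segment remark on extreme rays), and then applies a multidimensional intermediate value theorem: a continuous face-preserving self-map of a simplex is surjective, proved via the no-retraction theorem. So your dismissal of ``purely topological'' arguments is too hasty: soft topology does close the gap once it is applied to the exit map on the base simplex, where face-invariance is exact, rather than at the singular apex where tangency (your $(x,y)\mapsto(x+y^2,y)$ example) is a genuine obstruction. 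Your analytic induction, by contrast, remains open at its decisive step.
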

\begin{proof}
There exists a Euclidean ball $B_r$ around $P^0$ where
$\frac{p_j}{p_j^*}-\frac{p_i}{p_i^*}\neq 0$ ($(i,j) \in J$). Due to (\ref{QuadLyap}),
inside $B_r$, the divergence $H_2(P\|P^*)$ strictly decreases with $\lambda$ increasing
along any ray $P^0+\lambda e$, $e\in \mathbf{Q}_J(P,P^*)$ ($\lambda>0$). For each ray, we
can find the minimum of $H_2(P\|P^*)$ in $B_r$. Let the maximum of these minima be $h_r
(P^0)$:
$$h_r (P^0) =\max_{e\in \mathbf{Q}_J(P,P^*)}\left\{ \min_{\lambda >0} \{H_2(P^0+\lambda e\|P^*) \, | \,  P^0+\lambda e \in B_r \} \right\}\, .$$
By construction, $ H_2(P^0\|P^*)> h_r (P^0)$. The set $$U=\{P \in B_r \, | \,
H_2(P\|P^*)> h_r (P^0)\}$$ is a vicinity of $P^0$. The intersection
$(P^0+\mathbf{Q}_J(P^0,P^*))\cap U$ is
$$(P^0+\mathbf{Q}_J(P^0,P^*))\cap U=\{P\in (P^0+\mathbf{Q}_J(P^0,P^*)) \, | \, H_2(P\|P^*)> h_r (P^0)\}\, .$$
For any system from $\mathcal{K}_J$ on $B_r$ and for any distribution $P\in
(P^0+\mathbf{Q}_J(P^0,P^*))$ the velocity vector $\D P/\D t$ belongs to
$\mathbf{Q}_J(P^0,P^*)$. Obviously, $(P+\mathbf{Q}_J(P^0,P^*))\subset
(P^0+\mathbf{Q}_J(P^0,P^*))$.  Therefore, the solution of this system with the initial
condition $P(0)=P^0 $ may leave the intersection $(P^0+\mathbf{Q}_J(P^0,P^*))\cap U$
through the level surface $ H_2(P^0\|P^*)= h_r (P^0)$ only. After that, the solution
cannot return to $U$ because in $U$ the value of $H_2(P^0\|P^*)$ are bigger, $
H_2(P^0\|P^*)> h_r (P^0)$, and $H_2(P(t)\|P^*)$ should decrease in time along every
solution of any system from $\mathcal{K}_J$. Thus, one inclusion is proven,
$$(P^0+\mathbf{Q}_J(P^0,P^*))\cap U\supseteq \Phi_J(P^0)\cap U\, .$$

To prove the second inclusion, $(P^0+\mathbf{Q}_J(P^0,P^*))\cap U\subseteq
\Phi_J(P^0)\cap U$, we have to demonstrate that the solutions $P(t)$ ($P(0)=P^0$,
$t\geq 0$) of the equations from $\mathcal{K}_J$ cover $(P^0+\mathbf{Q}_J(P^0,P^*))$ in
some vicinity of $P^0$.

The polyhedral cone $\mathbf{Q}_J(P^0,P^*)$ is covered by the simplicial cones spanned by
the sets of linearly independent vectors $\gamma^{ji} {\rm
sign}(\frac{p_j^0}{p_j^*}-\frac{p_i^0}{p_i^*})$. Therefore, it is sufficient to prove the
second inclusion for the simplicial cones $\mathbf{Q}_J(P^0,P^*)$.

Let the vectors $\{\gamma^{ji} | (i,j)\in J\}$ be linearly independent. For the
simplicity of notation, let us enumerate the states in the order of the values of
${p_j^0}/{p_j^*}$:
$$\frac{p_1^0}{p_1^*}\geq \frac{p_2^0}{p_2^*} \geq \ldots \geq \frac{p_n^0}{p_n^*}\, .$$
In these notations, ${\rm sign}(\frac{p_j^0}{p_j^*}-\frac{p_i^0}{p_i^*})=1$ for all
$(i,j)\in J$ because $i>j$ and $\frac{p_j^0}{p_j^*}-\frac{p_i^0}{p_i^*}\neq 0$ for
$(i,j)\in J$.

Consider a subset of the cone $\mathbf{Q}_J(P^0,P^*)$ (a ``pyramid'')

\begin{equation}\label{pyramid}
\mathcal{Q}_J(P^0,P^*)=\left\{\sum_{(i,j) \in J} \theta_{ij}\gamma^{ji}\, \Big| \,
\theta_{ij}\geq 0, \, \sum_{(i,j)\in J} \theta_{ij} < 1  \right\}\, .
\end{equation}

The ``base'' of this pyramid is a simplex

$$\mathcal{B}_J(P^0,P^*)=\left\{\sum_{(i,j)\in J} \theta_{ij}\gamma^{ji}\,
\Big| \, \theta_{ij}\geq 0, \, \sum_{(i,j)\in J} \theta_{ij} = 1 \right\}\, .$$

Let $\alpha>0$ be sufficiently small and, therefore,
$\frac{p_j}{p_j^*}-\frac{p_i}{p_i^*}\neq 0$ ($(i,j) \in J$) in $P^0 + \alpha
\mathcal{Q}_J(P^0,P^*)$. For this $\alpha$, a  solution $P(t)$ ($t\geq 0$) of an equation
from the class $\mathcal{K}_J$ with initial data $P(0)=P^0$ may leave $P^0 + \alpha
\mathcal{Q}_J(P^0,P^*)$ only through its base, $P^0+ \alpha \mathcal{B}_J(P,P^*)$.

Let us prove that if $\alpha$ is sufficiently small then for each point $y \in
\mathcal{B}_J(P,P^*)$ there exists a system in $\mathcal{K}_J$ whose solution $P(t)$ ($t>
0$, $P(0)=P^0$) leaves $P^0 + a\mathcal{Q}_J(P^0,P^*)$ through the point $P^0+\alpha y$.
This means that $P(t_1)= P^0+ \alpha x $ for some $t_1>0$ and $P(t) \in
\mathcal{Q}_J(P,P^*)$ for $0<t< t_1$.

Each vector $x\in \mathcal{B}_J(P,P^*)$ can be expanded into a linear combination of
$\gamma^{ji}$ ($(i,j)\in J$):
\begin{equation}\label{x-representation}
x=\sum_{(i,j)\in J}\theta_{ij} \gamma^{ji}, \; \theta_{ij}\geq 0 \mbox{ and } \sum_{(i,j)\in
J} \theta_{ij} = 1 .
\end{equation}
With this expansion we define the system $K_x\in \mathcal{K}_J$ by the condition
$\frac{\D P}{\D t}\big|_{P=P^0}=x$:
\begin{equation}\label{x-equation}
\frac{\D P}{\D t}=\sum_{(i,j)\in J}\theta_{ij}\left(\frac{p_j^0}{p_j^*}-
\frac{p_i^0}{p_i^*}\right)^{-1} \left(\frac{p_j}{p_j^*}-\frac{p_i}{p_i^*}\right) \gamma^{ji}\, .
\end{equation}
(Just take $w_{ij}^*=\theta_{ij}\left(\frac{p_j^0}{p_j^*}-\frac{p_i^0}{p_i^*}\right)$ for
$(i,j)\in J$ in (\ref{QuasiChemKol}).) A solution $P(t)$ ($P(0)=P^0$) of this equation
(\ref{x-equation}) can be also expanded into a linear combination of $\gamma^{ji}$
($(i,j)\in J$) ($x \in \mathcal{B}_J(P,P^*)$):
\begin{equation}\label{x-solution}
P(t)=P^0+tx+\frac{t^2}{2}f(t,x)=P^0+ \sum_{\theta_{ij}> 0}\left(t\theta_{ij}+ \frac{t^2}{2} \nu_{ij}(t,\{\theta_{lm}\}) \right) \gamma^{ji},
\end{equation}
where $\nu_{ij}(t,\{\theta_{lm}\})$ are analytic functions. If $x$ belongs to a face $F$
of the cone $\mathbf{Q}_J(P^0,P^*)$ then $P(t) \in P^0+F$ for sufficiently small $t$.

The moment $t=t(\alpha,x)$ when the solution $P(t)$ (\ref{x-solution}) leaves $P^0 +
\alpha \mathcal{Q}_J(P^0,P^*)$ is a root of equation
$$t+\frac{t^2}{2}\sum_{(i,j)\in J} \nu_{ij}(t,\{\theta_{lm}\}) = \alpha .$$
Due to the standard inverse function theorems this root exists and the function
$t(\alpha,x)$ is  smooth for sufficiently small $\alpha$ for all $x\in
\mathcal{B}_J(P,P^*)$, and $t(\alpha,x)=\alpha+o(\alpha)$. The solution $P(t)$
(\ref{x-solution}) of the system $K_x$ (\ref{x-equation}) leaves $P^0 + \alpha
\mathcal{Q}_J(P^0,P^*)$ at the point $P(t(\alpha,x))=P^0+\alpha y(x)$, where $y(x)\in
\mathcal{B}_J(P,P^*)$.

To prove that $y(\bullet): \mathcal{B}_J(P,P^*) \to \mathcal{B}_J(P,P^*)$ is a
homeomorphism of the simplex $\mathcal{B}_J(P,P^*)$ onto itself, let us notice that the
map $x \mapsto y(x)$ leaves the faces of the simplex $\mathcal{B}_J(P,P^*)$ invariant:
vertices transform into themselves, the same for edges, etc.

We use the following topological lemma, the {\em multidimensional intermediate value
theorem}. Consider a continuous map  $\Psi: \Delta_n \to \Delta_n$ of the $n$-dimensional
standard simplex into itself. Let each face $F\subset\Delta_n$ be $\Psi$-invariant, i.e.
$\Psi (F) \subset F$. Then $\Psi$ is surjective. The proof is possible by induction in
$n$: for $n=0$ it is obvious, for $n=1$ this is just a 1D intermediate value theorem. In
all dimensions, it can be proved on the basis of the ``no-retraction theorem''
\cite{Kuratowski} and simple inductive topological reasoning, which reduces the general
case to the situation when all the faces $F\subset\Delta_n$ consist of fixed points of
the map $\Psi$.

Therefore, for sufficiently small $\alpha$ the solutions $P(t)$ ($P(0)=P^0$, $t\geq 0$)
of the equations from $\mathcal{K}_J$ cover $(P^0+a\mathcal{Q}_J(P^0,P^*))$ in some
vicinity of $P^0$. The second inclusion is proven. Let us combine the inclusions and
reduce the vicinities, if necessary.
\end{proof}

If $\frac{p^0_i}{p^*_i}\neq  \frac{p^0_j}{p^*_j}$ for all $i,j$ ($i\neq j$) then
$$\mathbf{Q}(P^0,P^*)=\mathbf{Q}(P,P^*)$$
for $P$ in some vicinity of $P^0$. If for some pairs $i,j$ ($i\neq j$)
$\frac{p^0_i}{p^*_i} =  \frac{p^0_j}{p^*_j}$ (see Fig.~\ref{Fig:extremes1}c) then for
some $P\in P^0+ \mathbf{Q}(P^0,P^*)$ the cone $\mathbf{Q}(P,P^*)$ may be bigger than
$\mathbf{Q}(P^0,P^*)$ even in a small vicinity of $P^0$. Nevertheless, the set of
trajectories $P(t)$ ($t>0$, $P(0)=P^0$) remains in $P^0+ \mathbf{Q}(P^0,P^*)$  for
sufficiently small $t$. Let us prove this statement.

Let $\mathcal{K}$ be the class of all master  equations with detailed balance with the
positive equilibrium $P^*$  (\ref{QuasiChemKol}) with $w_{ij}^*\geq 0$ for all $(i,j)$
($i>j$). We define $\Phi(P^0)$ for an initial distribution $P^0$ as a set of all values
$P(t)$ ($t>0$) for solutions $P(t)$ of all equations from the class $\mathcal{K}$ with
initial value $P(0)=P^0$.
\begin{proposition}\label{OrderJustify2}
For every probability distribution $P^0$ there exists a vicinity $U$ of $P^0$ where
$P^0+\mathbf{Q}(P^0,P^*)$ coincides with $\Phi(P^0)$:
$$(P^0+\mathbf{Q}(P^0,P^*))\cap U=\Phi(P^0)\cap U\, .$$
\end{proposition}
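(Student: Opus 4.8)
The plan is to prove the two inclusions separately, as in Proposition~\ref{OrderJustify1}, the new feature being the \emph{degenerate} pairs $(i,j)$ ($i>j$) with $p_i^0/p_i^*=p_j^0/p_j^*$: their generators $\gamma^{ji}\,\mathrm{sign}(p_j^0/p_j^*-p_i^0/p_i^*)$ vanish at $P^0$, so they are absent from the generating list of $\mathbf{Q}(P^0,P^*)$, yet they become active with either sign arbitrarily close to $P^0$. The coverage inclusion $(P^0+\mathbf{Q}(P^0,P^*))\cap U\subseteq\Phi(P^0)\cap U$ is immediate from the previous proposition: letting $J_0=\{(i,j):i>j,\ p_i^0/p_i^*\neq p_j^0/p_j^*\}$, the vanishing generators simply drop out, so $\mathbf{Q}_{J_0}(P^0,P^*)=\mathbf{Q}(P^0,P^*)$ and the nondegeneracy hypothesis of Proposition~\ref{OrderJustify1} holds for $J=J_0$; since $\mathcal{K}_{J_0}\subset\mathcal{K}$ gives $\Phi_{J_0}(P^0)\subseteq\Phi(P^0)$, that proposition yields $(P^0+\mathbf{Q}(P^0,P^*))\cap U=\Phi_{J_0}(P^0)\cap U\subseteq\Phi(P^0)\cap U$ on a suitable ball.

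The reverse inclusion --- that every detailed-balance trajectory from $P^0$ stays in $P^0+\mathbf{Q}(P^0,P^*)$ for small $t>0$ --- is the substantive part, and here the new phenomenon bites: near a degenerate $P^0$ the cone $\mathbf{Q}(P,P^*)$ may be strictly larger than $\mathbf{Q}(P^0,P^*)$, so the instantaneous velocity of the trajectory need not lie in $\mathbf{Q}(P^0,P^*)$. I would reduce the claim to a finite family of scalar inequalities by describing $Q_\sigma=\mathbf{Q}(P^0,P^*)$ through its supporting functionals: $v\in Q_\sigma$ iff $\langle\phi,v\rangle\ge 0$ for every extreme ray $\phi$ of the dual cone, and these are exactly the potentials that do not decrease from any class of $P^0$ to the next (with no ordering imposed within a class). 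It then suffices to show that for each such $\phi$ the scalar $g_\phi(t)=\langle\phi,P(t)-P^0\rangle$ stays $\ge 0$ for small $t\ge 0$.

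Differentiating and inserting the quasichemical form (\ref{QuasiChemKol}) gives $\dot g_\phi=-\sum_{i>j}w_{ij}^*\,(p_i/p_i^*-p_j/p_j^*)(\phi_i-\phi_j)$. For cross-class pairs the non-decrease of $\phi$ along the order makes each summand nonnegative, and this sign persists in a neighbourhood of $P^0$; for within-class pairs the factor $p_i/p_i^*-p_j/p_j^*$ is zero at $P^0$ and only $O(t)$ along the trajectory. Hence $g_\phi(0)=0$ with $\dot g_\phi(0)\ge 0$, and when $\dot g_\phi(0)>0$ the inequality $g_\phi(t)\ge 0$ is clear for small $t$. The delicate case is $\dot g_\phi(0)=0$, which occurs precisely when the states distinguished by $\phi$ carry no cross-class equilibrium flux; there I would pass to second order, expanding $P(t)=P^0+t\dot P(0)+\tfrac{t^2}{2}\ddot P(0)+\cdots$ as in (\ref{x-solution}) and checking that $\ddot g_\phi(0)\ge 0$, the point being that the within-class transitions are \emph{driven} by the $O(t)$ ratio differences created by the cross-class fluxes and therefore relax the tight coordinate toward the interior of $Q_\sigma$. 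Proving this second-order sign in full generality --- that the within-class relaxation can never push the displacement out through a face of $Q_\sigma$ --- is the main obstacle, and it is exactly the difficulty that the nondegeneracy hypothesis let us avoid in Proposition~\ref{OrderJustify1}. To finish, as in that proof, I would use the strict decrease of $H_2(P\|P^*)$ in (\ref{QuadLyap}) to turn these one-sided estimates into a genuine vicinity $U$ on which a trajectory that leaves $(P^0+\mathbf{Q}(P^0,P^*))\cap U$ cannot return, and intersect the finitely many vicinities produced for the individual functionals $\phi$.
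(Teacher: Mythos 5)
Your easy direction is fine and matches the paper: with $J_0=\{(i,j): i>j,\ p_i^0/p_i^*\neq p_j^0/p_j^*\}$ the degenerate generators drop out, $\mathbf{Q}_{J_0}(P^0,P^*)=\mathbf{Q}(P^0,P^*)$, and Proposition~\ref{OrderJustify1} together with $\mathcal{K}_{J_0}\subset\mathcal{K}$ gives the coverage inclusion. You have also located the difficulty correctly: everything hinges on showing that within-class transitions ($\sigma(i)=\sigma(j)$) can never push a trajectory out through a lateral face of $Q_\sigma$. But at precisely that point your argument stops: you reduce to the sign of $\ddot g_\phi(0)$ in the degenerate case, call proving it ``the main obstacle'', and leave it unproven. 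That step \emph{is} the proposition; everything before it is already contained in Proposition~\ref{OrderJustify1}.

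Worse, the route you propose cannot be completed as stated, because the tangency is not of second order in general. Take $n=4$, $r_i=p_i^0/p_i^*$ with $r_1>r_2=r_3=r_4$ (so $S_2=\{2,3,4\}$, $Q_\sigma=\mathrm{cone}\{\gamma^{12},\gamma^{13},\gamma^{14}\}$), and the detailed-balance chain with $w^*_{21}=w^*_{32}=w^*_{43}=1$ and all other fluxes zero. For the dual extreme ray $\phi=(0,0,0,1)$ one has $g_\phi(t)=p_4(t)-p_4^0$, and a direct computation from (\ref{QuasiChemKol}) gives $\dot g_\phi(0)=\ddot g_\phi(0)=0$ while $\dddot g_\phi(0)=(r_1-r_2)/(p_2^*p_3^*)>0$: the trajectory is tangent to the facet $\mathrm{cone}\{\gamma^{12},\gamma^{13}\}$ to second order, and a within-class relay of length $m$ pushes the first nonzero derivative to order $m$. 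So no fixed-order Taylor test can work, and an all-orders induction would amount to a different proof. The paper avoids perturbation expansion entirely: by the second decomposition theorem the vector field is a combination, with non-negative functional coefficients, of two-state fields $A_i\rightleftharpoons A_j$, and a convex set positively invariant under each summand is positively invariant under the combination; the cross-class summands have velocities in $Q_\sigma$ near $P^0$, while for the within-class summands one projects onto the coordinates $p_i$, $i\in S_l$, where $P^0$ projects to an \emph{equilibrium} of the restricted kinetics, so that exact positive invariance of the sets $V_l$, $P^0_{S_l}+aV_l$ and $\mathrm{conv}\bigl((P^0_{S_l}+aV_l)-aV_l\bigr)$ yields positive invariance of the pyramid $P^0+a\mathcal{Q}(P^0)$ with no order-of-contact analysis at all. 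To salvage your dual-functional scheme you would need an invariance statement rather than a pointwise derivative sign at $t=0$, which in effect forces you back to that projection argument.
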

\begin{proof}
The inclusion $(P^0+\mathbf{Q}(P^0,P^*))\cap U\subset \Phi(P^0)\cap U$ is proven in the
second part of the proof of Proposition~\ref{OrderJustify1} because $\mathcal{K}_J
\subset \mathcal{K}$. We have to prove the inclusion $(P^0+\mathbf{Q}(P^0,P^*))\cap
U\supset \Phi(P^0)\cap U$.

Let us use the combinatorial description of compartments  and cones
(\ref{CompartConeDescr}). We assume that  $P^0\in \mathcal{C}_{\sigma}$ for a surjection
$\sigma :\{1,2,\ldots ,n\} \to \{1,2,\ldots ,k+1\}$. Let us recall that $k=\dim
\mathcal{C}_{\sigma}$. If $k=n-1$ then $\mathcal{C}_{\sigma}$ is an open subset of the
distribution space and the preimage of every $l=1,2,\ldots, n$ consists of one point. For
every $P\in \mathcal{C}_{\sigma}$ the cone $\mathbf{Q}(P,P^*)$ coincides with
$\mathbf{Q}(P^0,P^*)$ and due to Proposition~\ref{OrderJustify1} there exists a vicinity
$U$ of $P^0$ where $P^0+\mathbf{Q}(P^0,P^*)$ coincides with $\Phi(P^0)$.

Let $k<n-1$. Then for some $i=1, \ldots , k+1$ the preimage of $i$ includes more than 1
point, $|\sigma^{-1}(i)|>1$. Let $I$ be the set of such $i$ and $S_i=\sigma^{-1}(i)$ is
the preimage of $i$. Due to (\ref{CompartConeDescr}),
$$\mathbf{Q}(P^0,P^*)={\rm cone}\{\gamma^{ij} \ | \ \sigma(j)=\sigma(i)+1 \}.$$
For a sufficiently small ball $U_r$ with the centre $P^0$ and  $P\in
(P^0+\mathbf{Q}(P^0,P^*))\cap U$ the cone $\mathbf{Q}(P,P^*)$ may include also some
$\gamma^{ij}$ with $\sigma (i)=\sigma(j)$ but
\begin{equation}\label{ConeExtencion}
\mathbf{Q}(P,P^*)\subset{\rm cone}\{\gamma^{ij} \ | \ \sigma(j)=\sigma(i)+1\mbox{ or }
\sigma(j)=\sigma(i) \}.
\end{equation}

Let us prove that for any Markov chain with equilibrium $P^*$ for sufficiently small time
$\tau>0$ and a ball  $U_{r/2}$ with the centre $P^0$ the solutions of the Kolmogorov
equations $P(t)$ do not leave $P^0+\mathbf{Q}(P^0,P^*)$ during the time interval
$[0,\tau]$ if $P(0) \in (P^0+\mathbf{Q}(P^0,P^*))\cap U_{r/2}$.

A set $V$ is positively invariant with respect to a dynamical system if every motion that
starts in $V$ at $t=0$ remains there for $t>0$. Let a convex set $V$ be positively
invariant with respect to several dynamical system given by Lipschitz vector fields
$\mathbf{w}_1, \ldots, \mathbf{w}_r$. Then $V$ is positively invariant with respect to
any combination $\mathbf{w}=\sum_j f_j \mathbf{w}_j$, where $f_j$ are non-negative
functions and $\mathbf{w}$ is a Lipschitz vector field. Therefore, the problem of
positive invariance of a convex set with respect to such combinations of vector fields
can be ``split'' into problems of the positive invariance of $V$ with respect to summands
$\mathbf{w}_j$. Due to the second decomposition theorem, we can always assume that the
vector field of the Kolmogorov equation for the Markov kinetics is a linear combination
of the vector fields of the pairs of elementary transitions $A_i\rightleftharpoons A_j$
with the same equilibrium. The coefficients in these combinations are non-negative
functions.

The motion $P(t)$ with $P(0)\in (P^0+\mathbf{Q}(P^0,P^*))$ does  not leave
$(P^0+\mathbf{Q}(P^0,P^*))$ in time $t \in [0,\tau]$ if $\D P(t)/\D t \in
\mathbf{Q}(P^0,P^*)$ on $[0,\tau]$.

The cone $\mathbf{Q}(P^0,P^*)$ is generated by vectors $\gamma^{ij}$ with
$\sigma(j)=\sigma(i)+1$. To generate a cone $\mathbf{Q}(P,P^*)$ for a point $P\in U_r$ we
have to add to the set of  $\gamma^{ij}$ ($\sigma(j)=\sigma(i)+1$) some of $\gamma^{ij}$
with $\sigma(j)=\sigma(i)$. Let us consider the pyramid (compare to (\ref{pyramid}))
$$\mathcal{Q}(P^0)=\left\{\sum_{\sigma(j)=\sigma(i)+1} \theta_{ij}\gamma^{ji}\, \Big| \, \theta_{ij}\geq 0, \,
\sum_{(i,j)\in J} \theta_{ij} < 1  \right\}\, .$$ We will prove that the set $P^0+a
\mathcal{Q}(P^0)$ is positively invariant with respect to any first order kinetics with
transitions $A_i \rightleftharpoons A_j$ ($i,j\in S_l$) and equilibrium $P^*$ for any
$l=1,\ldots, k+1$.

It is sufficient to consider dynamics in projections on the coordinate  subspace
$\mathbf{R}_{S_l}$ with coordinates $p_i$, $i\in S_l$ for every $l\in I$ separately. In
this space, vectors $\gamma^{ij}$ ($i,j\in S_l$) correspond to the standard first order
kinetics like (\ref{QuasiChemKol}) with the reduced vector $P\in \mathbf{R}_{S_l}$ but
without compulsory unit balance ($\sum_{i\in S_l }p_i =const$ with any $const>0$). A
projection of $P^0$ on $\mathbf{R}_{S_l}$, $P^0_{S_l}$ is an equilibrium for this first
order kinetics with the balance $\sum_{i\in S_l }p_i =\sum_{i\in S_l }p^0_i$ because
$\frac{p^0_i}{p^*_i}=\frac{p^0_j}{p^*_j}$ for $i,j\in S_l$.

The vectors $\gamma^{ij}$ that generate $\mathbf{Q}(P^0,P^*)$ ($\sigma(j)=\sigma(i)+1$)
(\ref{ConeExtencion}) have non-zero projections on $\mathbf{R}_{S_l}$  if and only if
either $l=\sigma(j)=\sigma(i)+1$ or $\sigma(j)=\sigma(i)+1=l+1$. In the first case,
$l=\sigma(j)=\sigma(i)+1$, vector $\gamma^{ij}$ is the standard basis vector $e_j$ in
$\mathbf{R}_{S_l}$. In the second case, $\sigma(j)=\sigma(i)+1=l+1$, we have
$\gamma^{ij}=-e_i$. If $l=1$ then only the second case is possible, and  if $l=k+1$ then
only the first case can take place.

Let $V_l=\{P\in \mathbf{R}_{S_l}\ | \ p_i\geq 0, \sum_{i\in S_l}p_i<1\}$. The projection
of the pyramid $\mathcal{Q}(P^0)$ onto $\mathbf{R}_{S_l}$ is $\mbox{conv}(V_l-V_l))$ if
$1<l<k+1$; it is $V_l$ if $l=k+1$ and $-V_l$ if $l=1$. (For sets $X,Y$, the sum $X+Y$ is
the set of all sums $x+y$ ($x\in X$, $y\in Y$),  the difference $X-Y$ is the set of all
differences $x-y$, therefore $V-V$ is not $\{0\}$ if $V$ includes more than one element.)

The set $V_l$ is positively invariant with respect to the first order kinetics in
$\mathbf{R}_{S_l}$. Therefore, the following sets are also positively invariant with
respect to the first order kinetics in $\mathbf{R}_{S_l}$ with equilibrium $P^0_{S_l}$
for every $a>0$:
$$P^0_{S_l}+a V_l, \,(P^0_{S_l}+a V_l)-a V_l,\, \mbox{conv}((P^0_{S_l}+a V_l)-a V_l)\, .$$
Thus, the set $P^0+a \mathcal{Q}(P^0)$ is positively invariant with respect to any first
order kinetics with transitions $A_i \rightleftharpoons A_j$ ($i,j\in S_l$) and
equilibrium $P^*$ for any $l=1,\ldots, k+1$. A combination of  these statements for all
$l=1, \ldots , k+1$ finalizes the proof.
\end{proof}

This proposition finalizes the justification of the use of the cone of the tangent
directions $\mathbf{Q}(P^0,P^*)$ in the definition of the local minimum of the Markov
order (\ref{localExtr}).

\section{Equivalence of the maxima of all entropies and the Markov order approaches
\label{Sec:Equival}}

The cone $\mathbf{Q}(P^0,P^*)$ is a piecewise constant function of $P^0$: it is the same
for all $P^0$ from one compartment $ \mathcal{C}_{\sigma}$ and, hence, depends on
$\sigma$  only. Therefore, if the condition of the local minimum (\ref{localExtr}) holds
for one $P^0 \in L \cap \mathcal{C}_{\sigma}$ then it holds also for all elements of $L
\cap \mathcal{C}_{\sigma}$. There is a finite number of compartments
$\mathcal{C}_{\sigma}$.

Let the linear manifold of conditions $L$ be given by the values of moments $\sum_i
m_{ri}p_i=M_r$, $L^0= \ker m$ and $L\cap \Delta^{n-1}_+ \neq \emptyset$. The set of all
conditional local minima of the Markov order on the linear manifold of conditions $L$ is
\begin{equation}\label{CombinLocMin}
\boxed{\bigcup \left\{L \cap \mathcal{C}_{\sigma} \ \big| \ L \cap \mathcal{C}_{\sigma}
\neq \emptyset \; \mbox{ and }\; L^0\cap Q_{\sigma}=\{0\}\right\}\, ,}
\end{equation}
where $\mathcal{C}_{\sigma}$ and $Q_{\sigma}$ are defined by (\ref{CompartConeDescr}). It
is sufficient to find all $\sigma$ such that $L \cap \mathcal{C}_{\sigma} \neq \emptyset$
and $L^0\cap Q_{\sigma}=\{0\}$ and then describe the union of the compartments
$\mathcal{C}_{\sigma}$ for these $\sigma$.

The approach based on the minimization of all $f$-divergencies seems to be very
different. For all monotonically increasing functions $g$ we have to solve the equations
for the Lagrange multipliers and represent the probability distribution in the form
(\ref{LagrangeMaxEntAnsw}). Nevertheless, these approaches are equivalent and describe
the same set of the ``conditionally maximally disordered distributions''.

\begin{theorem}\label{EquivalenceOrders}A positive distribution $P^0\in L$ satisfies the local conditional minimum conditions
of the Markov order (\ref{localExtr}) if and only if there exists a strictly monotonic
function $g$ on $\mathbb{R}$ with ${\rm im}\, g=(0,\infty)$ such that the conditions
(\ref{LagrangeMaxEntAnsw}) hold for some Lagrange multipliers and for $p_i^0=p_i$.
\end{theorem}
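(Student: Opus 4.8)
The plan is to localize the problem to a single compartment and then recast both conditions as a statement about the polyhedral cone $Q_\sigma$ and the subspace $L^0=\ker m$. Since $\mathbf{Q}(P^0,P^*)$ depends only on the compartment $\mathcal{C}_\sigma$ containing $P^0$ and $L=P^0+L^0$ is affine, the local condition (\ref{localExtr}) is literally the linear-algebra statement $Q_\sigma\cap L^0=\{0\}$. On the other side I would rewrite the Maxent conditions (\ref{LagrangeMaxEntAnsw}) in the additive form (\ref{LagrangeMaxEnt}): with $\mu_j=\sum_{r}\lambda_r m_{rj}$, a positive $P^0\in\mathcal{C}_\sigma$ admits a representation (\ref{LagrangeMaxEntAnsw}) for some strictly monotonic $g$ with ${\rm im}\,g=(0,\infty)$ if and only if there are multipliers $\lambda$ for which $\mu=(\mu_j)\in(L^0)^\perp={\rm range}(m^\top)$ reproduces the order type of $P^0$: namely $\mu_i=\mu_j$ whenever $\sigma(i)=\sigma(j)$, and $\mu_i>\mu_j$ whenever $\sigma(i)<\sigma(j)$. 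The ``only if'' is forced by injectivity of a strictly monotone $g$ (so $g(\mu_i)=g(\mu_j)$ on a level forces $\mu_i=\mu_j$); the ``if'' is an elementary interpolation, extending the finitely many matched values $\mu^{(1)}>\dots>\mu^{(k+1)}$ and $x^{(1)}>\dots>x^{(k+1)}>0$ (the common values of $p^0_i/p^*_i$ on the levels) to a strictly increasing $g$ on $\mathbb{R}$ with range $(0,\infty)$.

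With both sides reduced, the equivalence becomes a duality between $Q_\sigma={\rm cone}\{\gamma^{ij}\mid\sigma(j)=\sigma(i)+1\}$ and $L^0$. The forward implication (Maxent $\Rightarrow$ local minimum) is the easy half: if $\mu\in(L^0)^\perp$ strictly decreases between adjacent levels, then $\langle\mu,\gamma^{ij}\rangle=\mu_j-\mu_i<0$ on each generator, hence $\langle\mu,v\rangle<0$ for every nonzero $v\in Q_\sigma$, while $\langle\mu,v\rangle=0$ on $L^0$; therefore $Q_\sigma\cap L^0=\{0\}$. For the converse I would first note that $Q_\sigma$ is a \emph{pointed} polyhedral cone, because the functional $v\mapsto\sum_i\sigma(i)\,v_i$ takes value $\sigma(j)-\sigma(i)=1$ on each generator and so is strictly positive on $Q_\sigma\setminus\{0\}$; thus $Q_\sigma$ has a compact base, which under $Q_\sigma\cap L^0=\{0\}$ is disjoint from the subspace $L^0$. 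Strict separation of this compact base from $L^0$ produces $\psi$ with $\psi|_{L^0}=0$ and $\psi>0$ on $Q_\sigma\setminus\{0\}$, i.e. $\psi\in(L^0)^\perp$ with $\psi_j>\psi_i$ across every adjacent pair, and $\mu=-\psi$ is then a candidate multiplier.

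The hard part — and the step I expect to carry the real content — is guaranteeing that the separating multiplier can be chosen \emph{constant on each degenerate level set} $\sigma^{-1}(\ell)$, which is exactly what injectivity of $g$ demands. The functional $\psi$ produced by separation controls only between-level differences (its pairings with the generators $\gamma^{ij}$); it is a priori unconstrained on the within-level directions, i.e. on the subspace $W^\perp$ of vectors summing to zero on each level. To force $\mu$ constant on levels I would pass to the lumped quotient: let $\pi:\mathbb{R}^n\to\mathbb{R}^{k+1}$ sum coordinates within levels, so that $\pi(Q_\sigma)={\rm cone}\{\hat e_{\ell+1}-\hat e_\ell\}$ is the standard chain cone, whose polar is precisely the monotone cone $\{u_1\ge\dots\ge u_{k+1}\}$. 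A constant-on-levels $\mu$ (i.e. $\mu_i=u_{\sigma(i)}$) lies in $(L^0)^\perp$ iff $u\perp\pi(L^0)$ and is strictly decreasing iff $u$ is in the interior of that monotone cone; so the required multiplier exists iff $(\pi L^0)^\perp$ meets the interior of the monotone cone.

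The crux is therefore to show that $Q_\sigma\cap L^0=\{0\}$ \emph{upstairs} forces this incidence \emph{downstairs}, and here the interaction between $L^0$ and $W^\perp$ must be handled directly. The clean route is a Motzkin-type theorem of the alternative applied to the mixed system ``$\mu=m^\top\lambda$, equalities within levels, strict inequalities between levels'': its feasibility is equivalent to the infeasibility of a dual system which, after identifying the within-level differences with $W^\perp$, asserts exactly that $Q_\sigma\cap(L^0+W^\perp)\neq\{0\}$. I expect the genuine difficulty to be concentrated in the \emph{degenerate compartments} (those with $k<n-1$, where $W^\perp\neq\{0\}$ and the conditions $Q_\sigma\cap L^0=\{0\}$ and $Q_\sigma\cap(L^0+W^\perp)=\{0\}$ can pull apart); in the full-dimensional compartments $\sigma$ is a bijection, $W^\perp=\{0\}$, and the two conditions coincide at once, so that case is immediate. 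Controlling the degenerate compartments — verifying that the within-level freedom cannot obstruct the constant-on-levels choice of multiplier — is the one place where I would slow down and argue carefully, most likely by combining the positive invariance established in Proposition~\ref{OrderJustify2} with the quotient description (\ref{CompartConeDescr}) of $Q_\sigma$ and the combinatorial enumeration (\ref{CombinLocMin}).
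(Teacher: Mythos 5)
Your reformulation of both sides is correct, and your ``easy half'' (Maxent $\Rightarrow$ local minimum) is a complete proof of that implication. Your reduction is also more honest than the paper's own argument: the paper separates the pointed cone $\mathbf{Q}(P^0,P^*)$ from $L^0$ to obtain $\psi$ with $\psi|_{L^0}\equiv 0$ satisfying (\ref{psiPositive}), and then asserts that (\ref{psiPositive}) ``is equivalent to'' the existence of a strictly monotonic $\eta$ with $\psi_i=\eta(p_i^0/p_i^*)$, built by interpolating ``the known values $\psi_i$''. That interpolation is well defined only if $\psi$ is constant on each level set $\sigma^{-1}(\ell)$, which (\ref{psiPositive}) does not supply; this is exactly the point you isolated as the crux, namely the passage from $Q_\sigma\cap L^0=\{0\}$ to $Q_\sigma\cap(L^0+W^\perp)=\{0\}$. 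So you located the real content of the theorem precisely — but your proposal never proves this passage; it only names tools (a Motzkin alternative, the invariance from Proposition~\ref{OrderJustify2}) that you hope will carry it.

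No tool can carry it, because the passage is false, and with it the ``only if'' direction of Theorem~\ref{EquivalenceOrders} as stated. Take $n=3$, $P^*=(1/3,1/3,1/3)$, $P^0=(1/2,1/4,1/4)$, and moment rows $m_0=(1,1,1)$, $m_1=(1,-1,0)$, so that $L=\{P \mid \sum_ip_i=1,\ p_1-p_2=1/4\}$ and $L^0=\mathrm{span}\{(1,1,-2)\}$. The ratios are $(3/2,3/4,3/4)$, so $\sigma^{-1}(1)=\{1\}$, $\sigma^{-1}(2)=\{2,3\}$, $W^\perp=\mathrm{span}\{e_2-e_3\}$, and by (\ref{COneQPP}) $\mathbf{Q}(P^0,P^*)=\mathrm{cone}\{e_2-e_1,\ e_3-e_1\}$. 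A vector $a(e_2-e_1)+b(e_3-e_1)=(-a-b,a,b)$ with $a,b\ge 0$ lies in $L^0$ only if $-2a-b=0$, i.e. $a=b=0$, so the local minimum condition (\ref{localExtr}) holds; but $e_3-e_1=-(1,1,-2)+(0,1,-1)\in Q_\sigma\cap(L^0+W^\perp)$, so your dual criterion fails, and indeed (\ref{LagrangeMaxEntAnsw}) is unsatisfiable: it forces $g(\lambda_0-\lambda_1)=3/4=g(\lambda_0)$, hence $\lambda_1=0$ by injectivity of $g$, hence $3/2=g(\lambda_0)=3/4$. Directly: parametrizing $L$ by $P(s)=(s+1/4,\,s,\,3/4-2s)$, every differentiable strictly convex $h$ gives $\frac{\D}{\D s}H_h(P(s)\|P^*)\big|_{s=1/4}=h'(3/2)-h'(3/4)>0$, so $P^0$ is the conditional minimizer of no such divergence. (The example does not contradict Corollary~\ref{CorLocalMIn}: a majorization check shows no single $P^1\in L$ decreases \emph{all} divergences at once, so $P^0$ is still minimal for $>_{H,P^*}$; what fails is representability by \emph{one} smooth divergence, which is what (\ref{LagrangeMaxEntAnsw}) encodes.) Thus the theorem is true exactly where you said it was immediate — full-dimensional compartments, $W^\perp=\{0\}$, where your argument and the paper's both go through — and fails in the degenerate compartments you flagged: your suspicion that the two conditions ``can pull apart'' is the truth of the matter, not an obstacle to be argued away. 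The statement can only be repaired by weakening ``strictly monotonic $g$'' to monotonic $g$ (equivalently, allowing strictly convex $h$ with a corner at the tied value $3/4$, whose subdifferential jump absorbs the multiplier mismatch), or by restricting $P^0$ to non-degenerate compartments.
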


This means that every conditionally minimal distribution of the Markov order on the
linear manifold $L\cap \Delta^{n-1}_+$ is a conditional minimum on $L\cap \Delta^{n-1}_+$
of a strictly convex $f$-divergence (\ref{Morimoto}).
\begin{proof}
Due to the classical theorems about separation of convex sets and linear spaces by linear
functionals \cite{Rockafellar1970}, a distribution $P^0$ satisfies the condition of the
local minimum (\ref{localExtr}) if and only if there exists a linear functional
$\psi(P)=\sum_i \psi_i p_i$ such that $\psi|_L=\psi(P^0)=const$ and $\psi(P)>\psi(P^0)$
for every $P\in P^0 + \mathbf{Q}(P^0,P^*)$ if $P\neq P^0$. In other words, $\psi|_{L^0}
\equiv 0$ and
\begin{equation}\label{psiPositive}
(\psi_i-\psi_j)\left(\frac{p_i^0}{p_i^*}-\frac{p_j^0}{p_j^*} \right)>0\; \mbox{ if } \;
\frac{p_i^0}{p_i^*}\neq \frac{p_j^0}{p_j^*}\, .
\end{equation}
according to the definition of $\mathbf{Q}(P,P^*)$ (\ref{COneQPP}). Condition
$\psi|_{L^0} \equiv 0$ is equivalent to the existence of the coefficients $\lambda_r$
such that for all $i$
$$\psi_i = \sum_r \lambda_r m_{ri} \, .$$
Condition (\ref{psiPositive}) is equivalent to the existence of a strictly monotonic
function $\eta(x)$ defined for $x\geq 0$ such that $$\psi_i =
\eta\left(\frac{p_i^0}{p_i^*}\right) \, .$$ To find such a function $\eta(x)$ we can take
the known values $\psi_i$ for $x={p_i^0}/{p_i^*}$ and then use, for example, linear
interpolation $\eta(x)$ between ${p_i^0}/{p_i^*}$. To extrapolate $\eta(x)$ from
$\max\{{p_i^0}/{p_i^*}\}$ to $+\infty$ we can use an increasing linear function. To
extrapolate $\eta(x)$ on the interval $(0,\min\{{p_i^0}/{p_i^*}\})$ we can use
$\varepsilon \log x + const$.

Finally, we can take $h'(x)=\eta(x)$, $h(x)=\int \eta (\xi)\, \D \xi$; and $g(y)$ is the
inverse function: $g(\eta(x))=x$ for $x\geq 0$. The distribution $P^0$ is the local
minimum of $H_h(P\|P^*)$ on $L$.

Conversely, if $P^0$ is a minimum of a strictly convex Lyapunov function $H$ on $L$ and
$\D H /\D t|_{P^0}<0$ for every Markov chain with equilibrium $P^*$ for which $P^0$ is a
non-equilibrium distribution then we can take $$\psi_i =-\left.\frac{\partial H}{\partial
p_i}\right|_{P^0}\, .$$
 This choice of $\psi_i$ provides (\ref{psiPositive}) (because $H$ is strictly decreasing
 in time Lyapunov function) and $\psi|_{L^0} \equiv 0$ because grad$H$ is orthogonal to $L$ (the
 condition of local minimum).
\end{proof}

This equivalence of two definitions of the maximally uncertain distribution under given
conditions has several important consequences.

Let us introduce the notion of the (global) {\em Markov order} \cite{GorGorJudge2010}.
\begin{itemize}
\item{If for distributions $P^0$ and $P^1$ there exists such a Markov process with
    equilibrium $P^*$ that for the solution of the Kolmogorov equation with
    $P(0)=P^0$ we have $P(1)=P^1$ then we say that $P^0$ and $P^1$ are connected by
    the Markov preorder \cite{GorGorJudge2010} with equilibrium $P^*$ and use
    notation $P^0 \succ^0_{P^*} P^1$.}
\item{The (global) { Markov order} is the closed transitive closure of the Markov
    preorder. For the Markov order with equilibrium $P^*$ we use notation $P^0
    \succ_{P^*} P^1$.}
\end{itemize}

The {\em local Markov order} at point $P^0$ is just a vector order generated by the
tangent cone $\mathbf{Q}(P^0,P^*)$ \cite{GorGorJudge2010}. We use for this local order
the notation $>_{P^0,P*}$: $$P>_{P^0,P^*} P' \; \mbox{ if }\; P'-P \in
\mathbf{Q}(P^0,P^*).$$ The proofs of Propositions~\ref{OrderJustify1} and
\ref{OrderJustify2} give us the possibility to use  the relation $P^0>_{P^0,P^*} P^1$
instead of the Markov preorder for the definition of the Markov order minimizers on
linear manifolds. The relation $P^0>_{P^0,P^*} P^1$ is defined by the local Markov order
in a vicinity of $P^0$:
$$P^1-P^0\in \mathbf{Q}(P^0,P^*).$$ The cone $\mathbf{Q}(P^0,P^*)$ depends on $P^0$,
therefore, the relation $P^0>_{P^0,P^*} P^1$ is antisymmetric locally, in a vicinity of
$P^0$.

\begin{remark}It is possible to generate the Markov order by the relation $P^0>_{P^0,P^*} P^1$.
Let us specify the vicinity of $P^0$ where this relation is defined and introduce a new
relation:
 $P^0>_{P^*}^0 P^1$ if $P^0>_{P^0,P^*} P^1$  for all $i,j=1, \ldots, n$ and
$$\left(\frac{p^0_i}{p^*_i}-\frac{p^0_j}{p^*_j}\right)\left(\frac{p^1_i}{p^*_i}-\frac{p^1_j}{p^*_j}\right)\geq
0\, .$$  This condition means that the pairs of numbers
$(\frac{p^0_i}{p^*_i},\frac{p^0_j}{p^*_j})$ and
$(\frac{p^1_i}{p^*_i},\frac{p^1_j}{p^*_j})$ cannot have an opposite order on the real
line. The closed transitive closure of the relation $P^0>_{P^*}^0
 P^1$ is the Markov order $P^0 \succ_{P^*} P^1$
\end{remark}
Let $L$ be a linear manifold in the space of distributions. By definition, $P^0\in L$ is
a minimal point on $L \cap \Delta^{n-1}_+$ with respect to the order $ \succ_{P^*}$ if
and only if there is no point $P^1\in L\cap \Delta^{n-1}_+ $, $P^1\neq P^0$ such that
$P^0 \succ_{P^*} P^1$.
\begin{corollary}\label{GlobalMinimum}
$P^0\in L\cap \Delta^{n-1}_+$ is a minimal point on $L \cap \Delta^{n-1}_+$ with respect
to the (global) Markov order if and only if it satisfies the local minimum condition
(\ref{localExtr}).
\end{corollary}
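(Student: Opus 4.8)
The plan is to read this corollary as the bridge between the \emph{local} cone condition (\ref{localExtr}) and the \emph{global} relation $\succ_{P^*}$, and to prove the two implications separately, leaning on Theorem~\ref{EquivalenceOrders} for one direction and on Proposition~\ref{OrderJustify2} for the other. Almost all of the substantive work is already packaged in those two results; the corollary is mostly a matter of assembling them correctly.

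For the implication ``local minimum $\Rightarrow$ global minimal point'' I would first invoke Theorem~\ref{EquivalenceOrders}: the condition (\ref{localExtr}) supplies a strictly convex $f$-divergence $H=H_h(\cdot\|P^*)$ that attains its minimum over the convex set $L\cap\Delta^{n-1}_+$ at $P^0$, and which is a \emph{strictly} decreasing Lyapunov function along every Markov process with equilibrium $P^*$ for which $P^0$ is not a steady state. The key observation is that $H$ is non-increasing along every Markov trajectory, so the preorder relation $P^0\succ^0_{P^*}P^1$ forces $H(P^1)\le H(P^0)$. Since $P^0$ is the minimizer of $H$ on $L\cap\Delta^{n-1}_+$ and $H$ is strictly convex (hence the minimizer is unique), any $P^1\in L\cap\Delta^{n-1}_+$ with $H(P^1)\le H(P^0)$ must coincide with $P^0$. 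Therefore no $P^1\ne P^0$ with $P^0\succ_{P^*}P^1$ exists, which is exactly the statement that $P^0$ is a minimal point of the Markov order.

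For the converse, ``global minimal point $\Rightarrow$ local minimum'', I would argue by contraposition. Suppose (\ref{localExtr}) fails, so there is $P\in L$ with $P\ne P^0$ and $P-P^0\in\mathbf{Q}(P^0,P^*)$. Because $L$ is affine and $\mathbf{Q}(P^0,P^*)$ is a cone, the whole segment $P^0+\lambda(P-P^0)$ for $\lambda\in(0,1]$ lies in $L\cap(P^0+\mathbf{Q}(P^0,P^*))$; for small $\lambda$ it also lies in $\Delta^{n-1}_+$ (as $P^0$ is a relative-interior point) and in the vicinity $U$ furnished by Proposition~\ref{OrderJustify2}. That proposition then places this point $P^1:=P^0+\lambda(P-P^0)$ in $\Phi(P^0)$, i.e.\ $P^1=P(t_0)$ for some solution of a detailed-balance equation from $\mathcal{K}$ with $P(0)=P^0$ and some $t_0>0$. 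Rescaling all rate constants by the factor $t_0$ yields a Markov process reaching $P^1$ exactly at time $1$, so $P^0\succ^0_{P^*}P^1$ and hence $P^0\succ_{P^*}P^1$, with $P^1\ne P^0$ in $L\cap\Delta^{n-1}_+$; this contradicts the assumed minimality of $P^0$.

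The scaling and interior-point manipulations in the converse are routine, and the genuine content there is exactly the identification $(P^0+\mathbf{Q}(P^0,P^*))\cap U=\Phi(P^0)\cap U$ from Proposition~\ref{OrderJustify2}. The one delicate point in the forward direction is the passage from the Markov preorder $\succ^0_{P^*}$ to its \emph{closed transitive closure} $\succ_{P^*}$: the Lyapunov inequality $H(P^1)\le H(P^0)$ must survive both operations. Here I would absorb the transitive closure by transitivity of the ordinary order $\le$ on the scalar values $H(\cdot)$, and absorb the topological closure by continuity of $H$, so that $H(P^1)\le H(P^0)$ continues to hold for every $P^1$ with $P^0\succ_{P^*}P^1$. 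That is the step where care is needed; once it is secured, uniqueness of the strict-convex minimizer closes the argument.
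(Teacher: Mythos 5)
Your proof is correct and follows essentially the same route as the paper: the direction (\ref{localExtr}) $\Rightarrow$ global minimality uses the strictly convex divergence supplied by Theorem~\ref{EquivalenceOrders} as a Lyapunov function whose unique constrained minimizer on $L\cap\Delta^{n-1}_+$ is $P^0$, and the converse uses Proposition~\ref{OrderJustify2} to convert a nontrivial intersection of $P^0+\mathbf{Q}(P^0,P^*)$ with $L$ into an actual Markov trajectory. Your explicit treatment of the time rescaling and of the closed transitive closure (monotonicity of $H_h$ under $\succ^0_{P^*}$, transitivity of $\leq$, and continuity of $H_h$) merely fills in details that the paper's proof leaves implicit in its ``broken line'' argument.
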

\begin{proof}
If $P^0\in L \cap \Delta^{n-1}_+$ is a minimal point on $L\cap \Delta^{n-1}_+ $ with
respect to the (global) Markov order then it satisfies the condition (\ref{localExtr})
due to the definition of the Markov order through the transitive closure of the relation
$P^0 \succ^0_{P^*} P^1$ and Propositions~\ref{OrderJustify1} and \ref{OrderJustify2}.

Let $P^0$ satisfy the local minimum condition (\ref{localExtr}). Then there exists a
divergence $H_h(P\| P^*)$ with strictly convex $h(x)$ ($x\geq 0$) such that $P^0$ is a
local minimum of $H_h(P\| P^*)$ on $L$. Because of strong convexity, this local minimum
is a global one. $H_h(P\| P^*)$ is a Lyapunov function for all Markov chains with
equilibrium $P^*$. Therefore, a broken line, which is combined from solutions of the
Kolmogorov equations for such Markov chains and starts at $P^0$, leaves a small vicinity
of $P^0$ (Propositions~\ref{OrderJustify1} and \ref{OrderJustify2}) and never returns in
a sufficiently small vicinity of $L$. Thus, for the closed transitive closure of the
relation $P \succ^0_{P^*} P'$, point $P^0$ is a minimal point on $L$.
\end{proof}
Of course, there may be infinitely  many minimal points of the Markov order on $L$ and
each of them corresponds to a different Lyapunov functions $H_h(P\| P^*)$.

Another remarkable order on the space of distributions is $P^0 > _{H,P^*} P^1$ if for all
strictly convex functions $h(x)$ ($x\geq 0$)
$$H_h(P^0\|P^*) > H_h(P^1\|P^*)\, ,$$
that is, $P^1$ is closer to equilibrium than $P^0$  with respect to {\em all}
divergencies $H_h(P\| P^*)$.

\begin{corollary}\label{CorLocalMIn}
For any linear manifold $L$ in the distribution space the minimal elements of the Markov
order $\succ_{P^*}$ on $L \cap \Delta^{n-1}_+$ coincide with the minimal elements of the
order $> _{H,P^*}$ on $L \cap \Delta^{n-1}_+ $.
\end{corollary}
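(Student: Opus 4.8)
The plan is to route everything through the two results already in hand: Corollary~\ref{GlobalMinimum}, which says that $P^0\in L\cap\Delta^{n-1}_+$ is minimal for the global Markov order $\succ_{P^*}$ exactly when it satisfies the local condition (\ref{localExtr}), and Theorem~\ref{EquivalenceOrders}, which says the same condition is equivalent to $P^0$ being the conditional minimizer on $L$ of some strictly convex $f$-divergence $H_{h_0}(\,\cdot\,\|P^*)$ (unique, hence global on $L\cap\Delta^{n-1}_+$ by strict convexity). I would establish the coincidence of the two sets of minimal elements by proving each inclusion separately, and the $\succ_{P^*}$-to-$>_{H,P^*}$ direction is the easy one.

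First I would show that a $\succ_{P^*}$-minimal point is $>_{H,P^*}$-minimal. If $P^0$ is minimal for $\succ_{P^*}$, take the witnessing $h_0$ from Theorem~\ref{EquivalenceOrders}: for every $P^1\in L\cap\Delta^{n-1}_+$ with $P^1\neq P^0$ one has $H_{h_0}(P^0\|P^*)<H_{h_0}(P^1\|P^*)$. In particular the inequality $H_{h_0}(P^0\|P^*)>H_{h_0}(P^1\|P^*)$ fails for this single strictly convex $h_0$, so the relation $P^0>_{H,P^*}P^1$, which would require the strict inequality for \emph{all} strictly convex $h$, cannot hold. Thus no $P^1\neq P^0$ lies strictly below $P^0$ in $>_{H,P^*}$, i.e.\ $P^0$ is $>_{H,P^*}$-minimal.

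For the reverse inclusion I would argue by contraposition: if $P^0$ is not $\succ_{P^*}$-minimal then it is not $>_{H,P^*}$-minimal. By Corollary~\ref{GlobalMinimum} the failure of $\succ_{P^*}$-minimality means (\ref{localExtr}) fails, so there is $P^1\in L$, $P^1\neq P^0$, with $P^1-P^0\in\mathbf{Q}(P^0,P^*)$. Since $\mathbf{Q}(P^0,P^*)$ is a cone and $L$ is affine, the points $P_\lambda=P^0+\lambda(P^1-P^0)$ lie in $L$ and, for small $\lambda>0$, in $\Delta^{n-1}_+$ (the generators $\gamma^{ji}$ are probability-preserving tangent vectors and $P^0$ is interior) and inside the vicinity $U$ furnished by Proposition~\ref{OrderJustify2}. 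That proposition places $P_\lambda$ in $\Phi(P^0)$: there is a genuine Markov chain with equilibrium $P^*$ whose Kolmogorov trajectory runs from $P^0$ to $P_\lambda$ in positive time, and this trajectory is non-stationary because it actually moves. Since every strictly convex $H_h(\,\cdot\,\|P^*)$ is a strict Lyapunov function for such chains away from equilibrium, $H_h$ strictly decreases along this trajectory, giving $H_h(P_\lambda\|P^*)<H_h(P^0\|P^*)$ simultaneously for all strictly convex $h$. Hence $P^0>_{H,P^*}P_\lambda$, so $P^0$ is not $>_{H,P^*}$-minimal.

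The main obstacle is this second direction, and specifically the passage from the purely geometric membership $P^1-P^0\in\mathbf{Q}(P^0,P^*)$ to an actual nearby point reachable by a true Markov trajectory along which all divergences strictly drop. This is exactly what Proposition~\ref{OrderJustify2} supplies, so the care needed is in checking that $P_\lambda$ can be taken in $L\cap\Delta^{n-1}_+\cap U$ with $P_\lambda\neq P^0$, and that the strict decrease of $H_h$ is uniform over the whole family of strictly convex $h$. The latter follows because each such $H_h$ is individually a strict Lyapunov function on the non-equilibrium part of the trajectory; the one subtlety is to keep the argument valid should the trajectory approach $P^*$, which is handled by restricting to a short initial time interval on which $P(t)\neq P^*$.
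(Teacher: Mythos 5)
Your proposal is correct and is essentially the paper's own argument: the paper's proof consists literally of the instruction to combine Theorem~\ref{EquivalenceOrders} with Corollary~\ref{GlobalMinimum}, and that is exactly what you do, identifying both sets of minimal elements with the set of points satisfying the local condition (\ref{localExtr}). Your second direction merely spells out what the paper leaves implicit --- that when (\ref{localExtr}) fails, Proposition~\ref{OrderJustify2} together with the strict Lyapunov ($H$-theorem) property of strictly convex $f$-divergences yields an explicit point $P_\lambda\in L\cap\Delta^{n-1}_+$ with $P^0>_{H,P^*}P_\lambda$ --- so it is a faithful unpacking of the terse proof rather than a different route.
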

\begin{proof}We just have to combine Theorem~\ref{EquivalenceOrders} with
Corollary~\ref{GlobalMinimum}.
\end{proof}
Thus, the minimal elements of the orders $\succ_{P^*}$ and $> _{H,P^*}$ on the linear
manifolds coincide. Nevertheless, it is necessary to mention the difference between these
orders. Let $P^0$ be a distribution. For $> _{H,P^*}$ the set of distributions $\{P \ | \
P^0 > _{H,P^*} P\}$ is convex as an intersection of convex sets $\{P \ | \ H_h(P^0\|P^*)>
H_h(P\|P^*)\}$ for various strictly convex $h$. This is not the case for the Markov
order. The set of distributions $\{P \ | \ P^0 \succ_{P^*} P\}$ may be non-convex. The
examples may be extracted from the papers \cite{Gorban1979,Zylka1985} (see
Fig.~\ref{FigLowerCones}).

\begin{figure}
\centering{
\includegraphics[width=0.6\textwidth]{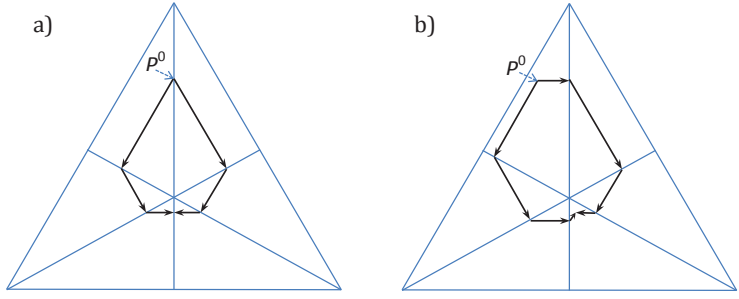}
\caption{\label{FigLowerCones}The set $\{P \ | \ P^0 > _{H,P^*}\}$ for different $P^0$ and for the Markov
chains with three states and equilibrium ($p_i^*=1/3$): (a) $\{P \ | \ P^0
> _{H,P^*}\}$ is convex, (b) it is not convex. The border of the set $\{P \ | \ P^0 > _{H,P^*}\}$ is
highlighted by bold lines. The arrows on these lines correspond to the directions of the extreme rays of the cones $\mathbf{Q}(P,P^*)$
(i.e. the angles represented in Fig.~\ref{3stateCones}).}}
\end{figure}

\begin{corollary}\label{WeakestLocal}
Let $P^0 \succ_{P^*} P^1$. Then $P^1 \in P^0+\mathbf{Q}(P^0,P^*)$.
\end{corollary}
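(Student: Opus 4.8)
The plan is to prove the inclusion directly by convex duality. Since $\mathbf{Q}(P^0,P^*)$ is a closed convex polyhedral cone (see (\ref{COneQPP}) and the discussion in Sec.~\ref{Sec:order}), the bipolar theorem for closed convex cones \cite{Rockafellar1970} reduces the claim $P^1-P^0\in\mathbf{Q}(P^0,P^*)$ to showing $\psi\cdot(P^1-P^0)\geq 0$ for every $\psi$ in the dual cone $\mathbf{Q}^*(P^0,P^*)$. First I would identify this dual cone from the generators (\ref{COneQPP}): the requirement $\psi\cdot\gamma^{ji}\,\mathrm{sign}(p_j^0/p_j^*-p_i^0/p_i^*)\geq 0$ amounts to $\psi_i\leq\psi_j$ whenever $p_i^0/p_i^*>p_j^0/p_j^*$, with no constraint between indices for which $p_i^0/p_i^*=p_j^0/p_j^*$. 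Thus $\mathbf{Q}^*(P^0,P^*)$ consists exactly of the functionals that are non-increasing as functions of $p_i^0/p_i^*$, ties left free.

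The heart of the argument is to realize each such $\psi$ as minus a subgradient at $P^0$ of a convex $f$-divergence. Writing $v_i=p_i^0/p_i^*$ and $\xi=-\psi$, so that $\xi_i\geq\xi_j$ whenever $v_i>v_j$, I would construct a convex (in general non-smooth) function $h$ on $[0,\infty)$ with $\xi_i\in\partial h(v_i)$ for every $i$. Grouping the states by the distinct values of $v_i$, the required $\xi$-values at a higher $v$-level all dominate those at any lower level, so a piecewise-linear convex $h$ with a kink at each distinct value $v_\ell$ whose subdifferential interval covers $\{\xi_i:v_i=v_\ell\}$ does the job. By separability of (\ref{Morimoto}) one has $\partial H_h(P^0)=\{\xi:\xi_i\in\partial h(v_i)\}$, so this exhibits $-\psi\in\partial H_h(P^0)$ for a genuine convex divergence $H_h(\,\cdot\,\|P^*)$.

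Then I would invoke the Lyapunov property. Each $f$-divergence decreases along continuous-time Markov processes with equilibrium $P^*$ (R\'enyi's theorem and the generalized data processing lemma recalled in the Introduction), so $P^0\succ^0_{P^*}P^1$ gives $H_h(P^1\|P^*)\leq H_h(P^0\|P^*)$; this inequality survives chaining (transitive closure) and, because the piecewise-linear $h$ makes $H_h(\,\cdot\,\|P^*)$ continuous on all of $\Delta^{n-1}$, also the topological closure, hence holds under the global order $P^0\succ_{P^*}P^1$. The subgradient inequality for the convex function $H_h$, namely $H_h(P^1\|P^*)\geq H_h(P^0\|P^*)+\xi\cdot(P^1-P^0)$ with $\xi=-\psi$, combined with $H_h(P^1\|P^*)\leq H_h(P^0\|P^*)$, yields $\psi\cdot(P^1-P^0)\geq 0$. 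As $\psi\in\mathbf{Q}^*(P^0,P^*)$ was arbitrary, the bipolar theorem gives $P^1-P^0\in\mathbf{Q}(P^0,P^*)$, i.e.\ $P^1\in P^0+\mathbf{Q}(P^0,P^*)$.

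The step I expect to be the main obstacle is the realization of the \emph{full} dual cone. The smooth, strictly convex divergences used in Theorem~\ref{EquivalenceOrders} force $\psi_i=\psi_j$ whenever $v_i=v_j$, so they span only the tie-respecting part of $\mathbf{Q}^*(P^0,P^*)$ and are insufficient when $P^0$ has coordinates with equal $p_i^0/p_i^*$; the extreme rays of $\mathbf{Q}^*$ generally separate such tied states (for instance $\psi=(-1,0,0)$ when $v_1=v_2>v_3$). Allowing non-smooth convex $h$ and selecting different subgradient values within a single kink is precisely what resolves this. It is also why one cannot simply chain the local cones $\mathbf{Q}(R_k,P^*)$ along an intermediate trajectory: those cones need not be nested under one-step moves once a tie splits, whereas the accumulated Lyapunov decrease controlling the net displacement is insensitive to the intermediate compartments.
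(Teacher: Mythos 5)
Your argument is correct, and it is best described as a completed, self-contained version of the paper's proof, which is a single sentence: apply Corollary~\ref{GlobalMinimum} to all supporting hyperplanes $L$ of $P^0+\mathbf{Q}(P^0,P^*)$ with $(P^0+\mathbf{Q}(P^0,P^*))\cap L=\{P^0\}$. Both proofs rest on the same two pillars --- the closed polyhedral cone is the intersection of the half-spaces determined by its dual cone, and each dual functional is certified by a convex Lyapunov divergence through a (sub)gradient inequality --- so the strategy is the same in spirit; what you add is substantive. The paper's certificates come from Theorem~\ref{EquivalenceOrders}, whose functionals necessarily have the form $\psi_i=\eta(p^0_i/p^*_i)$ for a strictly monotone $\eta$ and hence respect ties; as you observe, when $P^0$ lies on a compartment boundary these tie-respecting functionals do not conically generate the full dual cone $\mathbf{Q}^*(P^0,P^*)$. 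Your example is right: for $v_1=v_2>v_3$ (uniform $P^*$) the tie-respecting functionals $(a,a,b)$, $a\leq b$, cut out the half-plane $\left\{x \ \big| \ x_1+x_2+x_3=0,\ x_3\geq 0\right\}$, strictly larger than $\mathbf{Q}={\rm cone}\{e_3-e_1,e_3-e_2\}$, so the smooth, strictly convex machinery alone cannot close the bipolar argument there. Your piecewise-linear $h$, whose kinks have interval-valued subdifferentials, realizes exactly the tie-breaking dual functionals, and since the data-processing/Lyapunov property needs only convexity of $h$ (not smoothness or strict convexity), nothing is lost. You also treat the closed transitive closure in the definition of $\succ_{P^*}$ cleanly: for each fixed continuous $h$ the relation $H_h(B\|P^*)\leq H_h(A\|P^*)$ is closed and transitive, hence contains $\succ_{P^*}$. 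By contrast, Corollary~\ref{GlobalMinimum} by itself only says that no point of $L$ is below $P^0$, and passing from minimality on each hyperplane to containment of the whole lower set in each half-space requires precisely your subgradient inequality (or an intermediate-value argument along broken trajectories using Propositions~\ref{OrderJustify1} and \ref{OrderJustify2}). In short: same underlying route, but your execution is complete where the paper's one-liner is elliptic, and it repairs the behaviour at compartment boundaries (tied ratios $p_i^0/p_i^*$) that the paper's strictly-monotone-function construction does not cover.
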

\begin{proof}
Let us apply Corollary~\ref{GlobalMinimum} to all support hyperplanes $L$ of the convex
set $(P^0+\mathbf{Q}(P^0,P^*))$ for which $(P^0+\mathbf{Q}(P^0,P^*))\cap L=\{P^0\}$.
\end{proof}

\section{Example: generalization of the normal distribution}

In this section, we discuss distributions $p(x)$ on a continuous space of states, the
non-negative real semi-axis, $\mathbb{R}_+=\{x \ | \ x\geq 0\}$. We have in mind two
classical examples of distributions of the quantity bounded from below: energy (physics)
and wealth (economics and microeconomics).

Let two moments be fixed, the total probability $M_0=\int_0^{\infty}p(x) \, \D x$ and the
average quantity $M_1=\int_0^{\infty}x p(x) \, \D x$. The conditional maximization of the
classical Boltzmann--Gibbs--Shannon entropy gives:
$$\int p(x) (\ln p(x)-1) \, \D x \to \min \;\mbox{ for given }\; \int_0^{\infty}p(x) = M_0, \; \int_0^{\infty}p(x) \, \D
x=M_1\, ;$$
$$\ln p(x)=\lambda_0 + \lambda_1 x, \; p(x)=\exp (\lambda_0 + \lambda_1 x),\; \exp \lambda_0 = -\lambda_1 \; \exp \lambda_0=M_1 \lambda_1^2\, ;$$
\begin{equation}\label{BoltzmannDist}
\boxed{p^*(x)=\frac{1}{M_1} \exp\left(-\frac{x}{M_1}\right)\, . }
\end{equation}

This Boltzmann distribution appears always as a first candidate for the equilibrium
distribution of an additive conserved quantity bounded from below. Khinchin (1943)
clearly explained this law as a version of the limit theorem \cite{Khin}.

Technically, it is not difficult to involve the higher moments and obtain the
distribution of the form
\begin{equation}\label{manymoments}
p(x)=\exp (\lambda_0 + \lambda_1 x+ \lambda_2 x^2 +\ldots + \lambda_r x^r)\, .
\end{equation}
One can expect that this extension of the set of moments may improve the description.
This is a traditional belief in Extended Irreversible Thermodynamics (EIT)
\cite{JouEIT2001}.

There may be many different approaches to evaluation of the quality of the approximation
(\ref{manymoments}) but at least one important property of these functions is wrong: the
asymptotic behavior at large $x$ is $p(x) \asymp \exp (-const\times x^r)$. These
``super-light'' tails  of the distribution $p(x)$ change qualitatively with the change of
the order $r$ in (\ref{manymoments}).

If we use, for example, the ``regularizing'' forth moment in the moment chain for the
Boltzmann equation \cite{Levermore1996} then we corrupt the ${\rm e}^{-const\times v^2}$
tails of the Maxwell distribution. Therefore, other approaches which do not modify the
tails of the distribution qualitatively (like \cite{GorKar2006}) may be more appreciated.

The  asymptotic behavior of the distribution's tails was thoroughly studied in many
cases. Very often, the tails of the distributions are, without a doubt, heavier than
normal ${\rm e}^{-const\times x^2}$ and definitely are not cut as ${\rm e}^{-const\times
x^4}$. For example, it is demonstrated that the distribution of money between people has
the exponential tail with a possible transformation into a heavier power tail for very
rich people \cite{Yakovenko2009}.

The general solution (\ref{LagrangeMaxEntAnsw}) with the Boltzmann equilibrium
(\ref{BoltzmannDist}) gives the following expression  instead of (\ref{manymoments})
$$p(x)=g(\lambda_0 + \lambda_1 x+ \lambda_2 x^2 +\ldots + \lambda_r x^r) \frac{1}{M_1}
\exp\left(-\frac{x}{M_1}\right)\, , $$ where $g$ is a monotonically increasing  function. In particular,
for the moments $M_0$, $M_1$ and $M_2$ we obtain
\begin{equation}\label{gennorm}
p(x)=g(\lambda_0 + \lambda_1 x+ \lambda_2 x^2) \frac{1}{M_1} \exp\left(-\frac{x}{M_1}\right)\, .
\end{equation}

There are four qualitatively different cases of (\ref{gennorm}). Let $\lambda_2 \neq 0$
and $\mu=-\frac{\lambda_1}{2\lambda_2}$. Then
\begin{equation}\label{genNorm1}
\boxed{p(x)= \frac{f(x)}{M_1} \exp\left(-\frac{x}{M_1}\right)\, ,}
\end{equation}
 and
\begin{enumerate}
\item{if $\mu \leq 0$ and $\lambda_2>0$ then $f(x)$ is a monotonically increasing
    function on $[0,\infty)$;}
\item{if $\mu \leq 0$ and $\lambda_2<0$ then  $f(x)$ is a monotonically decreasing
    function  on $[0,\infty)$;}
\item{if $\mu > 0$ and $\lambda_2>0$ then  $f(x)$ is a monotonically increasing
    function on $[\mu,\infty)$ and $f(x)=f(2\mu-x)$ for $x \in[0,\mu]$;}
\item{if $\mu > 0$ and $\lambda_2<0$ then  $f(x)$ is a monotonically decreasing
    function  on $[\mu,\infty)$ and $f(x)=f(2\mu-x)$ for $x\in [0,\mu]$.}
\end{enumerate}
Each of these ``generalized normal distributions'' (\ref{genNorm1}) is a minimizer of the
corresponding $f$-divergence. For the construction of such a divergence in general case,
it is convenient to define the convex functions $h$ in (\ref{Morimoto}) with values on an
extended real line with additional possible value $+\infty$. This is a natural general
definition of convex functions \cite{Rockafellar1970}. In case 1 ($\mu\leq 0$,
$\lambda_2>0$, and $f$ increases), we can take in (\ref{gennorm}), (\ref{genNorm1}) without loss of generality
$\mu=0$, $f(x)=g(x^2)$, and $g(y)=f(\sqrt{y})$. The monotonically increasing function $g(y)$ is,
therefore, defined on $[0,\infty)$ with the set of values $[\underline{g},\overline{g})$,
where $\underline{g}=f(0)\geq 0$, $\overline{g}=\lim_{x\to \infty} f(x) > 0$ and the
upper limit may be finite or infinite. The inverse function $\xi(z)$ is defined for $z\in
[\underline{g},\overline{g})$ with the interval of values $[0,\infty)$. Let us take
\begin{equation}
h'(z)=\left\{\begin{array}{ll}
0 &\mbox{ if } z< \underline{g}; \\
\xi(z) &\mbox{ if }z\in [\underline{g},\overline{g}); \\
\infty &\mbox{ if } z \geq \overline{g};
\end{array}\right. \;\;
h(z)=\left\{\begin{array}{ll}
0 &\mbox{ if } z< \underline{g}; \\
\int_{\underline{g}}^z \xi(\varsigma) \D \varsigma &\mbox{ if }z\in [\underline{g},\overline{g}]; \\
\infty &\mbox{ if } z > \overline{g}.
\end{array}\right.
\end{equation}
The improper integral  $\int_{\underline{g}}^{\overline{g}} \xi(\varsigma) \D \varsigma$
may take finite or infinite values.

Similarly, in case 2 ($\mu\leq 0$ and $\lambda_2<0$, $f$ decreases) we define
$g(y)=f(\sqrt{-y})$ for $y\in (-\infty,0]$. The function $g(y)$ monotonically increases
and takes values on $(\underline{g},\overline{g}]$, where $\underline{g}=\lim_{x\to
\infty} f(x)$ and $\overline{g}=f(0)$. The inverse function $\xi(z)$ is defined for $z\in
(\underline{g},\overline{g}]$ with the interval of values $(-\infty, 0]$. In this case,
we can take
\begin{equation}
h'(z)=\left\{\begin{array}{ll}
-\infty &\mbox{ if } z < \underline{g}; \\
\xi(z) &\mbox{ if }z\in (\underline{g},\overline{g}]; \\
0 &\mbox{ if } z > \overline{g};
\end{array}\right. \;\;
h(z)=\left\{\begin{array}{ll}
\infty &\mbox{ if } z \leq \underline{g}; \\
-\int^{\overline{g}}_z \xi(\varsigma) \D \varsigma &\mbox{ if }z\in [\underline{g},\overline{g}]; \\
0 &\mbox{ if } z > \overline{g}.
\end{array}\right.
\end{equation}

In case 3, the construction is almost the same as for the case 1 but $f(x)=g((x-\mu)^2)$
and $g(y)=f(\sqrt{y}+\mu)$. In this case, $g(y)$ is a monotonically increasing function
defined on the interval $[0,\infty)$ with the set of values
$[\underline{g},\overline{g})$, where $\underline{g}=f(\mu)$ and $\overline{g}=\lim_{x\to
\infty} f(x)$. Similarly, for case 4, the construction of $h(z)$ is almost the same as in
case 2.

Thus, for every distribution in the form (\ref{genNorm1}) we can find a $f$-divergence
$H_h(P\|P^*)$, which conditional minimization produces this distribution. For example, if
in (\ref{genNorm1}) $f(x)=a x^{\beta}$ then we can take $h$ in the form
$h(z)=\frac{\beta}{\beta+ 2}(z/a)^{1+2/\beta}$.

\section{Conclusion}

{\em The Maxallent approach} aims to bring some order to the modern anarchy of the
measures of disorder. If there is no clear idea which entropy is better then we have to
use all of them together.

{\em The Markov order approach} was also proposed as an alternative to the entropic
anarchism. It is based on the idea that the disorder has to increase in random processes
with given equilibrium distribution, which is considered as the maximally disordered
state. Here, we have proved that these two approaches produce the same conditional
minimizers on the planes of given values of moments (Theorem~\ref{EquivalenceOrders}).

In this paper, we have considered several  relations between positive distributions:
\begin{enumerate}
\item{$P^0 \succ^0_{P^*} P^1$ if there exists a Markov chain with equilibrium $P^*$
    such that for the solution of the Kolmogorov equation $P(t)$  with $P(0)=P^0$ we
    have $P(1)=P^1$;}
\item{$P^0 \succ_{P^*} P^1$ if there exist integrable bounded functions $q_{ij}(t)$
    ($i,j=1,\ldots, n$, $i\neq j$, $t\geq 0$) such that $q_{ij}(t)$  satisfy the
    balance condition (\ref{MasterEquilibrium}) for given  $P^*$ ($p^*_i>0$) (for all
    $t\geq 0$), and $P(1)=P^1$ for solution $P(t)$ of the equations
    $$\frac{\D p_i}{\D t}= \sum_{j, \, j\neq i} (q_{ij}(t)p_j-q_{ji}(t)p_i) \;\; (i=1,\ldots, n)$$
    with $P(0)=P^0$ (that is, $\succ_{P^*}$ is the transitive closure of
$\succ^0_{P^*}$);}
\item{$P^0 >_{H,P^*} P^1$ if $H_h(P^0\|P^*) > H_h(P^1\|P^*)$ for all strictly convex
    functions $h(x)$ on a semi-axis $x\geq 0$.}
\item{$P^0>_{P^0,P^*} P^1$ if $P^1-P^0 \in \mathbf{Q}(P^0,P^*)$, where
    $\mathbf{Q}(P^0,P^*)$ is the cone of possible velocities $\D P/ \D t $
    (\ref{COneQPP}) at point $P^0$ for all Markov chains with equilibrium $P^*$.}
\end{enumerate}
All these relations are different. Three of them are antisymmetric, and one,
$P^0>_{P^0,P^*} P^1$, is locally antisymmetric, in a vicinity of $P^0$. Their
interrelations are described by the follows implications:
$$   (P^0 \succ^0_{P^*} P^1) \Rightarrow (P^0 \succ_{P^*} P^1)
\Rightarrow (P^0 > _{H,P^*} P^1 ) \Rightarrow (P^0>_{P^0,P^*} P^1)\, .$$

The local Markov order $P^0 >_{P^*} P^1$ is the weakest and the connection by a solution
of the Kolmogorov equation $P^0 \succ^0_{P^*} P^1$ is the strongest of these relations.
Nevertheless, locally, in a small vicinity of a positive non-equilibrium distribution
$P^0$, these relations coincide and they define the same set of locally minimal
distributions on a linear manifold of conditions $L$ (Propositions~\ref{OrderJustify1},
\ref{OrderJustify2}, Theorem~\ref{EquivalenceOrders}, Corollaries~\ref{GlobalMinimum},
\ref{CorLocalMIn} and \ref{WeakestLocal}).

Of course, there is  the other, the classical way to reduce the variability of the
measures of disorder. The divergences $H(P\| P^*)$ can be defined by their main
properties. This is an axiomatic approach: we postulate some ``natural properties'' of
the divergence, then find the divergences with these properties, evaluate the result and
decide whether we have to change the system of axiom or not. The axiomatic approach to
definition of entropy was used by Shannon \cite{Shannon1948} and elaborated in detail by
Khinchin \cite{Khinchin1957}.

Two distinguished additivity properties are important for the Maxent reasoning:
\begin{itemize}
\item{Additivity on the algebra of states: $H(P\| P^*)$ is a sum in states $$H(P\| P^*)=\sum_i \eta( p_i,p^*_i)\, .$$}
\item{Additivity with respect to the joining of independent subsystems. This means
    that if $P$ and $P^*$ are products of distributions then $H(P\| P^*)$ is the sum
    of the corresponding entropies: if $P=(p_{jl})=(q_j r_l)$ and
    $P^*=(p^*_{jl})=(q^*_j r^*_l)$ then $H(P\| P^*)=H(Q\| Q^*)+H(R\| R^*)$.}
\end{itemize}

If we join the first additivity property with the requirement that the divergence should
be a Lyapunov function for all Markov chains with equilibrium $P^*$ then we get $H_h(P\|
P^*)$ of the form (\ref{Morimoto}) \cite{ENTR3,Amari2009,GorGorJudge2010}. If we add the
second additivity property and require continuity of $H_h(P\| P^*)$ for all values of $P$
(including vectors with some $p_i=0$) then the classical Boltzmann--Gibbs--Shannon
relative entropy will be the only possibility (that is, $H_h(P\| P^*)$ with $h(x)=x\ln x$
up to unimportant constant factors and summand). If we relax the requirement of the
continuity to the set of strictly positive distributions then we will get the
one-parametric family $H_h(P\| P^*)$ with $h(x)=\beta  x\ln x -(1-\beta)\ln x$
\cite{ENTR3,GorGorJudge2010}.

Let us accept the point of view that the divergency is an order. Then the values are not
important and all the divergencies connected by a monotonic transformation of a scale,
$H=f(H')$ (with a monotonically increasing $f$), are equivalent. If the first additivity
property is valid in one scale, and the second may be valid in another one, then one more
one-parametric family appear, the Cressie--Read divergences (see Appendix~A)
\cite{ENTR3,GorGorJudge2010}. The Tsallis entropy is a particular case of them. The
Boltzmann--Gibbs--Shannon relative entropy (or the Kullback--Leibler entropy, which is
the same), the convex combination of $H_h(P\| P^*)$ and $H_h(P^*\| P)$ for $h(x)=x\ln x$,
and the Cressie--Read divergences (including the Tsallis relative entropy) form the
``entropic aristocracy'' distinguished mostly by the additivity properties.

If we accept the additivity on the algebra of states (i.e., the trace form) and the
additivity with respect to joining of independent subsystems, both, then we have to use
some of these functions. If additivity with respect to joining of independent subsystems
seems to be too restrictive then we have to take the wider class of divergencies, for
example, $H_h(P\| P^*)$ of the form (\ref{Morimoto}). If we reject the requirement of the
trace form then the variety of the admissible divergences becomes even richer. This
uncertainty in the choice of divergence forces us to use the Maxallent approach.

The Maxallent approach produces a set of conditionally maximally disordered distributions
instead of a single distribution that maximizes a selected distinguished entropy in the
usual Maxent method. These Maxallent sets of distributions may be considered as
probabilistic analogues of the type-2 fuzzy sets introduced by L. Zadeh \cite{Zadeh1975}
to capture the uncertainty of the fuzzy systems. The Maxallent approach is invented to
manage the uncertainty of the measures of uncertainty. If there is no uncertainty of
uncertainty then the set of distributions reduces to a single distribution.

The decomposition theorems for Markov chains provide us with tools for the efficient
calculation of the Markov order. Following \cite{GorbanEq2012arX}, we compare the general
Markov chains and the reversible chains with detailed balance. For any general chain
there is a reversible chain with the same velocity vector at a given point. The classes
of general and reversible chains locally coincide because they have the same cone of
possible velocities at every non-equilibrium distribution (the second decomposition
theorem, Appendix~B). This theorem gives us the possibility to describe the set of the
conditionally maximally uncertain distributions combinatorially, in the finite form
(\ref{CombinLocMin}).

For the classical Boltzmann--Gibbs--Shannon entropy the distribution on $\mathbb{R}_+$
with two given moments has the Gaussian form $a \exp(-b(x-c)^2)$. The class of the
Maxallent distributions on $\mathbb{R}_+$ with two given moments is also simple
(\ref{genNorm1}) but much richer. It can be produced by multiplication of the Boltzmann
distribution (\ref{BoltzmannDist}) by a monotonic function or unimodal function (with one
local maximum) or by a function with one local minimum.

There exists an attractive possibility: if a distribution can be obtained in the
Maxallent approach then it is a conditional minimum of a divergence. If we find or guess
a distribution of the Maxallent type for an empirical system then we can restore the
divergence and then use it in the standard Maxent reasoning.

The Maxallent approach is, surprisingly, efficient enough to analyze some practical
problems.  It gives an answer that does not depend on the subjective choice and,
therefore, returns us to the ``mission'' of information theory: {\em ``to eliminate
the psychological factors involved...''} \cite{Hartley1928}. At the same time, it has a
solid basis in the theory of Lyapunov functions for the Kolmogorov equations.

Now, essential mathematical work on the basic notion of entropy is needed. Gromov
suggests that the natural mathematical language for this work will involve nonstandard
analysis and category theory \cite{Gromov2012}. These abstract languages seem to be
closer to the basic intuition than the set theory of Cantor and the $\varepsilon -\delta$
reasoning of the classical analysis. Nevertheless, the basic idea of Maxallent is so
simple and natural, that it should persist in the future advanced theory of entropy: {\em
order is something that decreases in Markov processes}.

\section*{Appendix A. The most popular examples of $H_h(P \| P^*)$}

The most popular examples of $H_h(P \| P^*)$ are \cite{GorGorJudge2010}:

\begin{enumerate}
\item{Let $h(x)$ be the step function, $h(x)=0$ if $x=0$ and $h(x)=-1$ if $x> 0$. In
    this case,
\begin{equation}\label{Hartley}
H_h(P \| P^*)=-\sum_{i, \ p_i>0} 1\, .
\end{equation}
The quantity  $-H_h$ is the number of non-zero probabilities $p_i$ and does not
depend on $P^*$. Sometimes it is called the Hartley entropy.}
\item{$h=|x-1|$, $$H_h(P \| P^*)=\sum_i |p_i - p_i ^*| \, ;$$ this is the
    $l_1$-distance between $P$ and $P^*$.}
\item{$h=x\ln x$,
\begin{equation}\label{Kullback--Leibler}
H_h(P \| P^*)=\sum_i p_i \ln \left(\frac{p_i}{p_i^*}
\right)=D_{\mathrm{KL}}(P\|P^*) \, ;
\end{equation}
 this is the usual Kullback--Leibler divergence or the relative
Boltzmann--Gibbs--Shannon (BGS) entropy;}
\item{$h=-\ln x$,
\begin{equation}\label{Burg}
H_h(P \| P^*)=-\sum_i p^*_i \ln \left(\frac{p_i}{p_i^*} \right) =
D_{\mathrm{KL}}(P^*\|P)\, ;
\end{equation}
this is the relative Burg entropy. It is obvious that this is again the
Kullback--Leibler divergence, but for another order of arguments. }
\item{Convex combinations of $h=x\ln x$ and $h=-\ln x$ also produces a remarkable
    family of divergences: $h=\beta x\ln x- (1-\beta) \ln x$ ($\beta \in [0,1]$),
\begin{equation}\label{ConvComb}
H_h(P \| P^*)=\beta D_{\mathrm{KL}}(P\|P^*) +
(1-\beta)D_{\mathrm{KL}}(P^*\|P)\, ;
\end{equation}
this convex combination of divergences was used by Gorban in the early 1980s
\cite{G11984} and studied further by Gorban and Karlin \cite{ENTR1}. It becomes a
symmetric functional of $(P, P^*)$ for $\beta=1/2$. There exists a special name for
this case, ``Jeffreys' entropy".}
\item{$h=\frac{(x-1)^2}{2}$,
\begin{equation}
H_h(P \| P^*)=\frac{1}{2}\sum_i \frac{(p_i-p_i^*)^2}{p_i^*}=H_2(P \| P^*)\, ;
\end{equation}
 this is the quadratic term in the Taylor expansion of the relative
Boltzmann--Gibbs-Shannon entropy, $D_{\mathrm{KL}}(P\|P^*)$, near equilibrium. We
have used its time derivative in (\ref{QuadLyap}).}
\item{$h=\frac{x(x^{\lambda}-1)}{\lambda (\lambda+1)}$,
\begin{equation}\label{Cressie--Read}
H_h(P \| P^*)=\frac{1}{\lambda (\lambda+1)}
 \sum_i p_i\left[ \left(\frac{p_i}{p_i^*} \right)^{\lambda} -1 \right]\, ;
\end{equation}
  this is the Cressie--Read (CR) family of power divergences \cite{CR1984} (the
  modern exposition of the history, properties and applications of these entropies is
  presented in \cite{CichockiAmari2010}). For this family we use the notation $H_{\rm
  CR \ \lambda}$. If $\lambda \to 0$ then
   $H_{\rm CR \ \lambda} \to  D_{\mathrm{KL}}(P\|P^*)$, this is the classical BGS
 relative entropy; if $\lambda \to -1$ then $H_{\rm CR \ \lambda} \to
 D_{\mathrm{KL}}(P^*\|P)$, this is the relative Burg entropy. }
\item{For the CR family in the limits $\lambda \to \pm \infty$ only the maximal terms
    ``survive". Exactly as we get the limit $l^{\infty}$ of $l^p$ norms for $p \to
    \infty$, we can use $({\lambda (\lambda+1)}H_{\rm CR \ \lambda})^{1/|\lambda|}$
    for $\lambda \to \pm \infty$ and write in these limits:
\begin{equation}\label{CR+infty}
 H_{\rm CR \ \infty}(P \| P^*) =\max_i\left\{\frac{p_i}{p_i^*}\right\}-1 \, ;
\end{equation}
\begin{equation}\label{CR-infty}
H_{\rm CR \ -\infty} (P \| P^*)= \max_i\left\{\frac{p_i^*}{p_i}\right\}-1\, .
\end{equation}
The existence of two limiting divergences $H_{{\rm CR \ \pm\infty}}$ seems very
natural: there may be two types of extremely non-equilibrium states: with a high
excess of current probability $p_i$ above $p_i^*$ and, inversely, with an extremely
small current probability $p_i$ with respect to $p_i^*$. }
\item{The Tsallis relative entropy \cite{Tsallis1988}:
    $h=\frac{(x^{\alpha}-x)}{\alpha - 1}$, $\alpha     >0$,
\begin{equation}\label{Tsallis}
H_h(P \| P^*)=\frac{1}{\alpha-1}\sum_i p_i\left[
\left(\frac{p_i}{p_i^*} \right)^{\alpha-1} -1 \right] \, .
\end{equation}
For this family we use notation $H_{\rm Ts \ \alpha}$.}
\end{enumerate}

\section*{Appendix B. The decomposition theorems}

\newtheorem*{firstdecompose}{The first decomposition theorem}

\begin{firstdecompose}Every Markov chain with a positive
equilibrium is a conic combination of simple cycles with the same equilibrium.
\end{firstdecompose}
\begin{proof}
If a non-zero Markov chain has a positive equilibrium then it cannot be acyclic: there
exists at least one oriented cycle of transitions with nonzero rate constants. The length
of this cycle can vary from 2 to $n$. The set of all Markov chains with a positive
equilibrium $P^*$ is an intersection of a linear subspace given by the balance equations
(\ref{MasterEquilibrium}) with the positive orthant $\mathbb{R}_+^{n(n-1)}$. This is a
polyhedral cone which does not include a whole straight line. It is well known in convex
geometry that every such polyhedral cone is a convex hull of a finite number of its {\em
extreme rays} \cite{Rockafellar1970}.  A ray $l$ with direction vector $x\neq 0$ is a set
$l=\{\kappa x\}$ ($\kappa \geq 0$). By definition, it is an extreme ray of a cone
$\mathbf{Q}$ if for any $u \in l$ and any $x,y \in \mathbf{Q}$, whenever $u = (x + y)/2$,
we must have $x,y\in l$.

Any extreme ray of the cone of Markov chains with equilibrium $P^*$ is a simple cycle
$A_{i_1} \to \ldots \to A_{i_k} \to A_{i_1}$ with rate constants $q_{i_{j+1} i_j}=\kappa
/p_j^*$. Indeed, let a non-zero Markov chain $Q$ with coefficients $q_{ij}$ belong to an
extreme ray of this cone. This chain includes a simple cycle with non-zero coefficients,
$A_{i_1} \to \ldots \to A_{i_k} \to A_{i_1}$ ($k\leq n$, all the numbers $i_1, \ldots,
i_k$ are different, $q_{i_{j+1}\,i_{j}}>0$ for $j=1,\ldots, k$, and $i_{k+1}=i_1$). For
sufficiently small $\kappa$ ($0<\kappa<\kappa_0$),
$q_{i_{j+1}\,i_{j}}-\frac{\kappa}{p^*_{i_{j}}}>0$ ($j=1,\ldots, k$). Let $Q_{\kappa}$ be
the same simple cycle with the rate constants $q_{i_{j+1} i_j}=\kappa /p_j^*$. Then for
$0<\kappa <\kappa_0$ vectors $Q\pm Q_{\kappa}$ also represent  Markov chains with the
equilibrium $P^*$. Obviously, $Q=\frac{(Q+Q_{\kappa})+(Q-Q_{\kappa})}{2}$, hence, $Q$
should be proportional to $Q_{\kappa}$, by the definition of extreme rays.

So, any Markov chain with a positive equilibrium $P^*$ is a linear combination with
positive coefficients of the cycles with the same equilibrium. This decomposition is
global, it does not depend on the current distribution $P$.
\end{proof}

\newtheorem*{seconddecompose}{The second decomposition theorem}
\begin{seconddecompose}
For every Markov chain with a positive equilibrium $P^*$ and any probability distribution
$P^0$ the vector $\D P/\D t|_{P^0}$ is a conic combination of the vectors $\D P/\D
t|_{P^0}$ for the  simple cycles of length two $A_i \rightleftharpoons A_j$ with the same
equilibrium.
\end{seconddecompose}
\begin{proof}
Let us start from a simple cycle $A_1 \to A_2 \to \ldots \to A_n \to A_1$  with the
constants $q_{i+1 i}=1 /p_i^*$, where $p_i^*>0$ is the equilibrium. At a non-equilibrium
distribution $P$ the right hand side of equation (\ref{MAsterEq0}) is the vector $\D P /
\D t = \mathbf{v}_n$ with coordinates
\begin{equation}\label{cycleVelocity}
\frac{\D p_j}{\D t}=(\mathbf{v}_n)_j=\frac{p_{j-1}}{p^*_{j-1}}
-\frac{p_{j}}{p^*_{j}}\, .
\end{equation}

The flux $A_{j}\to A_{j+1}$ is ${p_{j}}/{p^*_{j}}$. Let us find $A_j$ with the minimum
value of this flux and, for convenience, let us put this $A_j$ in the first position by a
cyclic permutation. We will represent the right hand side vector $\mathbf{v}_n$ in the
form

$$\mathbf{v}_n=\mathbf{v}_{n-1}+ \kappa \mathbf{v}_2\, ,$$
where $\mathbf{v}_{n-1}$ corresponds to the cycle of the length $n-1$, $A_2 \to \ldots
A_n \to A_2$, with the rate constants $q_{i+1 i}=1 /p_i^*$ (and the cyclic convention
$n+1=2$), $\mathbf{v}_2$ corresponds to the cycle of the length 2, $A_1\rightleftharpoons
A_2$, with the rate constants $q_{21} = 1/{p^*_1}$, $q_{12} = 1/{p^*_ 2}$, and $\kappa
\geq 0$. Both velocities $\mathbf{v}_{n-1}$ and $\mathbf{v}_2$ should be calculated for
the same distribution $P$.

We find the constant $\kappa$ from the conditions: $\mathbf{v}_n=\mathbf{v}_{n-1}+\kappa
\mathbf{v}_2$ at the point $P$, hence, the two following reaction schemes, (a) and (b),
should have the same velocities, $\D P/\D t$:
$$\mbox{(a)  }A_n { \overset{1/p_n^*}{\rightarrow}}A_1{ \overset{1/p_1^*}{\rightarrow}}A_2 \mbox{  and  (b)  }
A_n { \overset{1/p_n^*}{\rightarrow}}A_2 \, ; A_1
\underset{\kappa/p_2^*}{\overset{\kappa/p_1^*}{\rightleftharpoons}}A_2 \, .$$
From this condition,
$$\kappa=\left(\frac{p_{n}}{p_{n}^*}-\frac{p_{1}}{p_{1}^*}\right)
\left(\frac{p_{2}}{p_{2}^*}-\frac{p_{1}}{p_{1}^*}\right)^{-1}\, .$$ The
inequality $\kappa \geq 0$ holds because ${p_{1}}/{p^*_{1}}$ is the minimal value of the
flux ${p_{j}}/{p^*_{j}}$.

We just delete the vertex with the smallest outgoing flux from the initial cycle of
length $n$ and add a cycle of the length 2 with the same equilibrium. Let us repeat this
operation for the remaining cycle of the length $n-1$, and so on. At the end, the left
hand side vector $\mathbf{v}_n$ will be represented as the combination with positive
coefficients the vectors $\D P /\D t$ for the cycles of the length 2, $A_i
\rightleftharpoons A_j$ with the same equilibrium. This is the system with detailed
balance. We have to stress here that the set of these transitions and the coefficients
$\kappa$ depend on the current distribution $P$.

For every distribution $P$, the velocity $\D P/\D t$ of every cycle with equilibrium
$P^*$ is a combination with positive coefficients of the velocities for some cycles of
the length two $A_i \rightleftharpoons A_j$ with the same equilibrium. Therefore, the
right hand side of the Kolmogorov equation for any Markov chain with equilibrium $P^*$
also allows such a decomposition.

It is necessary to stress that the decomposition of the right hand side of the Kolmogorov
equation (\ref{MAsterEq0}) into a conic combination of cycles of length 2 depends on the
ordering of the ratios $p_i/p_i^*$ and cannot be performed for all values of $P$
simultaneously.
\end{proof}

For more details and further references see \cite{GorbanEq2012arX}.

\end{document}